\newcommand{\bb}[1]{{\textcolor{blue}{#1}}}
\newcommand{\C}{\mathbb{C}}
\theoremstyle{plain}
\newtheorem{theorem}{Theorem}[section]
\newtheorem{proposition}[theorem]{Proposition}
\newtheorem{corollary}[theorem]{Corollary}
\theoremstyle{definition}
\newtheorem{definition}[theorem]{Definition}
\newtheorem{notation}[theorem]{Notation}
\theoremstyle{remark}
\newtheorem{remark}[theorem]{Remark}
\newcommand{\Z}{\mathbb{Z}}
\newcommand{\cD}{\mathcal{D}}
\newcommand{\cS}{\mathcal{S}}
\newcommand{\cW}{\mathcal{W}}
\newcommand{\cT}{\mathcal{T}}
\newcommand{\VEVc}[1]{{\big\langle \! 0 \big| {#1} \big| 0 \!\big\rangle^\circ}}
\newcommand{\res}{\mathop{\rm res}}
\newcommand{\restr}[2]{\mathop{\big\lfloor_{{#1}\to {#2}}}}
\newcommand{\set}[1]{\llbracket {#1} \rrbracket}
\def\Expr{\mathsf{Expr}}
\numberwithin{equation}{section}
\title%[Symplectic duality and fully simple maps]
{Topological recursion, symplectic duality, and generalized fully simple maps}
\author[A.~Alexandrov]{A.~Alexandrov}
\address{A.~A.: Center for Geometry and Physics, Institute for Basic Science (IBS), Pohang 37673, Korea
}
\email{alex@ibs.re.kr}
\author[B.~Bychkov]{B.~Bychkov}
\address{B.~B.: Department of Mathematics, University of Haifa, Mount Carmel, 3498838, Haifa, Israel}
\email{bbychkov@hse.ru}
\author[P.~Dunin-Barkowski]{P.~Dunin-Barkowski}
\address{P.~D.-B.: Faculty of Mathematics, HSE University, Usacheva 6, 119048 Moscow, Russia; HSE--Skoltech International Laboratory of Representation Theory and Mathematical Physics, Skoltech, Bolshoy Boulevard 30 bld. 1, 121205 Moscow, Russia; and NRC “Kurchatov Institute” -- ITEP, 117218 Moscow, Russia}
\email{ptdunin@hse.ru}
\author[M.~Kazarian]{M.~Kazarian}
\address{M.~K.: Faculty of Mathematics, HSE University, Usacheva 6, 119048 Moscow, Russia; and Igor Krichever Center for Advanced Studies, Skoltech, Bolshoy Boulevard 30 bld. 1, 121205 Moscow, Russia}
\email{kazarian@mccme.ru}
\author[S.~Shadrin]{S.~Shadrin}
\address{S.~S.: Korteweg-de Vries Institute for Mathematics, University of Amsterdam, Postbus 94248, 1090GE Amsterdam, The Netherlands}
\email{S.Shadrin@uva.nl}	
\begin{document}
	
\begin{abstract} For a given spectral curve, we construct a family of \emph{symplectic dual} spectral curves for which we prove an explicit formula expressing the $n$-point functions produced by the topological recursion on these curves via the $n$-point functions on the original curve. As a corollary, we prove topological recursion for the generalized fully simple maps generating functions. %, using symplectic duality.
\end{abstract}
	
\maketitle

\setcounter{tocdepth}{2}
\tableofcontents

\section{Introduction}

In this paper we study relations between $n$-point functions obtained by spectral curve topological recursion~\cite{EynardOrantin-toporec} for various spectral curves.
Specifically, the notion of the so-called \emph{symplectic duality} of spectral curves was introduced in~\cite[Section~1.3.5]{BDKS-symplectic} (where by spectral curve $\mathsf{SC}$  we mean an algebraic curve $\Sigma$ equipped with two meromorphic functions $X$ and $y$, and a bidifferential $B$, with certain conditions). In the present paper we introduce a large class of pairs $(\mathsf{SC}_1,\, \mathsf{SC}_2)$ of mutually symplectic dual spectral curves for which we provide an explicit closed formula expressing a given $n$-point function obtained by applying topological recursion on $\mathsf{SC}_2$ in terms of the topological recursion $m$-point functions of $\mathsf{SC}_1$, and vice versa. In this large class of pairs of spectral curves one can take $\mathsf{SC}_1$ as any spectral curve (under mild conditions), while $\mathsf{SC}_2$ depends on $\mathsf{SC}_1$ and a function $\psi$ satisfying certain conditions. Our explicit formula relating the respective $n$-point functions is, in a way, a generalization of the respective formulas from~\cite{BDKS-FullySimple} and~\cite{ABDKS-XYSwap}.

Our results can be used in another direction as well: if one has some set of $n$-point functions for which it is known that they are related by our explicit formula to the $m$-point functions of a certain spectral curve topological recursion, then our theorem implies that they are produced by topological recursion for a particular spectral curve. As an application of this, we prove that the \emph{generalized fully simple maps generating functions} satisfy topological recursion.

Fully simple maps are natural objects in algebraic topology, see~\cite{borot2019relating,BGF-SimpleMaps} for the precise definition, and for discussion on why they are interesting. In~\cite{BGF-SimpleMaps} it was conjectured that the generating functions for the weighted numbers of the fully simple maps satisfy %the %Chekhov--Eynard--Orantin 
 topological recursion, %~\cite{EynardOrantin-toporec},
  which was later proved in~\cite{borot2021topological} and in~\cite{BDKS-FullySimple}. In the latter paper a natural generalization of these generating functions was proposed (which has a combinatorial meaning too, see below), and it was conjectured that these generalized generating functions also satisfy topological recursion, see~\cite[Conjecture 4.4]{BDKS-FullySimple}. This conjecture got a new impulse as a natural source of examples of symplectic duality mentioned above, see~\cite[Section 1.3.5]{BDKS-symplectic}. In the present paper we prove an extended version of this conjecture.

%In~\cite{BDKS-FullySimple} it was also proved that the generating functions for the generalized-fully-simple-map-counting problem can be expressed in terms of the generating functions for the problem of counting generalized ordinary maps via a certain explicit formula. In~\cite{BDKS-symplectic} this formula was further generalized to the case of generalized double Hurwitz numbers and to the notion of the so-called \emph{symplectic duality}.

%In the present paper we prove (an extended version of) the conjecture of~\cite{BDKS-FullySimple} on the topological recursion for generalized fully simple maps. We perform it using the symplectic duality formula between generalized fully simple maps and generalized ordinary maps from \cite{BDKS-symplectic}, with the help of certain ideas coming from~\cite{ABDKS-XYSwap}, where another appearance of symplectic duality was considered.

\subsection{Two formulations of topological recursion}\label{sec:toprecedefs}

We present two formulations of topological recursion. The first one (the original formulation) is needed only for the definition of spectral curve symplectic duality in Section \ref{sec:SCSD}.
The second one is its reformulation in terms of the loop equations and the projection property~\cite{BS-blobbed}.%, further specialized to the case of the underlying curve of genus $0$ and rational functions on it as in~\cite{BDKS-toporec-KP}.

\subsubsection{Original definition}\label{sec:trorig}

%	The original formulation~(\cite{EynardOrantin-toporec}) of the spectral curve topological recursion, which is equivalent to the one given below, allows one to explicitly recursively reconstruct all the $n$-point differentials $\omega_{g,n}$ starting with the spectral curve data\\ $(\Sigma,x(z),y(z), B(z_1,z_2))$.

The Chekhov--Eynard--Orantin topological recursion~\cite{ChekhovEynard,ChekhovEynard2,EynardOrantin-toporec,EynardOrantin-toporec-enum,EynardSurvey} associates to an input that consists of
\begin{itemize}
	\item a compact Riemann surface $\Sigma$ with a chosen basis of $\mathfrak{A}$- and $\mathfrak{B}$-cycles (the so-called spectral curve);
	\item two meromorphic functions $X$ and $y$ on $\Sigma$ such that all critical points $p_1,\dots,p_N\in \Sigma$ of $X$ are simple, $y$ is regular at $p_i$ and $dy|_{p_i}\not=0$, $i=1,\dots,N$;
	\item a symmetric bi-differential $B$ on $S^2$ with a double pole on the diagonal with bi-residue $1$ and holomorphic at all other points, normalized in such a way that its $\mathfrak A$-periods in both variables vanish (the so-called Bergman kernel)
\end{itemize}
an output that consists of a system of symmetric meromorphic differentials $\omega^{(g)}_{m,0}$ on $\Sigma^m$, $g\geq 0$, $m\geq 1$, which are called the \emph{correlation differentials}. The recursion is given by setting $\omega^{(0)}_{1,0}(z_1)  \coloneqq 
y(z_1)dX(z_1)/X(z_1)$, $\omega^{(0)}_{2,0}(z_1,z_2) \coloneqq B(z_1,z_2)$, and then for $2g-2+m>0$ we have
\begin{align} \label{eq:CEO-TR-original}
	\omega^{(g)}_{m,0}(z_{\llbracket m \rrbracket}) & = \frac 12 \sum_{i=1}^N \res_{z=p_i} \frac{\int_z^{\sigma_i (z)} \omega^{(0)}_{2,0}(z_1,\bullet) }{\omega^{(0)}_{1,0}({\sigma_i (z)}) - \omega^{(0)}_{1,0}(z)} \Bigg( \omega^{(g-1)}_{m+1,0} (z,\sigma_i(z),z_{\llbracket m \rrbracket \setminus \{1\}})
	\\ \notag & \qquad + \sum_{\substack{g_1+g_2=g \\ I_1\sqcup I_2 = \llbracket m \rrbracket \setminus \{1\}\\ (g_i,|I_i|)\not= (0,0)}} \omega^{(g_1)}_{|I_1|+1,0}(z,z_{I_1})\omega^{(g_2)}_{|I_2|+1,0}(\sigma_i(z),z_{I_2}) \Bigg).
\end{align}
Here and below we use notation $\llbracket m \rrbracket \coloneqq \{1,\dots,m\}$, $z_I = \{z_i\}_{i\in I}$ for any $I\subseteq \llbracket m \rrbracket$, and $\sigma_i$ denotes the deck transformation with respect to the function $X$ near the point $p_i$. The extra seemingly superfluous subscript $0$ in the notation $\omega^{(g)}_{m,0}$ is explained below.

\begin{remark}\label{rem:moregentoprec}
	The requirements for $X$ and $y$ can be substantially relaxed, for instance, it is sufficient to assume that $ydX/X$ is meromorphic (and the function $X$ itself might have essential singularities), and in fact for $2g-2+m>0$ the recursion only uses the local germs of $X$ and $y$ at the points $p_1,\dots,p_N$. %However, in this paper we focus on the situation with $x$ and $y$ both globally defined meromorphic functions.
	
		The setup for what concerns our main application, a proof of a generalization of~\cite[Conjecture 4.4]{BDKS-FullySimple}, is $\Sigma=\C P^1$, meromorphic $dX/X$ and $y$, and $B=dz_1dz_2/(z_1-z_2)$ in some global coordinate $z$, cf.~\cite[Section 4.1]{BDKS-symplectic}; see Section~\ref{sec:mapsdef} below.
\end{remark}

\subsubsection{Loop equation definition}\label{sec:toprecloopeq}

According to \cite{BS-blobbed}, an equivalent reformulation of the topological recursion %(further specified to the case of a curve of genus zero)
goes as follows.
Let $\Sigma$, $X$, $y$ and $B$ be the same as in Section~\ref{sec:trorig} (in particular, recall that $p_1,\dots,p_N$ are the critical points of $X$).
%Let $X$ and $y$ be meromorphic functions on $\Sigma$ %=\mathbb{C}P^1$
% with affine coordinate $z$, and assume that the critical points of $X$ are $p_1,\dots,p_N$, all these critical points are simple; $y$ is regular at these points and has non-zero differential. Let $\sigma_i$ denote the deck transformation of $X$ near $p_i$. Let $B(z_1,z_2)\coloneqq dz_1dz_2/(z_1-z_2)^2$.
%
Consider a system of symmetric meromorphic differentials $\omega^{(g)}_{m,0}(z_{\llbracket m\rrbracket})$, $g\geq 0$, $m\geq 1$, and define functions $W^{(g)}_{m,0}(z_{\llbracket m\rrbracket})$ via%, and set
\begin{align}\label{eq:function-diff}
	%W_{g,n}&\coloneqq D_{1}\dots D_{n} \Mf_{g,n},\quad D_{i}\coloneqq X(z_i)\tfrac{d}{dX(z_i)},\\
	\omega^{(g)}_{m,0}(z_{\llbracket m\rrbracket})
	%&\coloneqq d_1\dots d_n \Mf_{g,n}+\delta_{g,0}\delta_{n,2}\tfrac{dX(z_1) dX(z_2)}{(X(z_1)-X(z_2))^2}\\
	&=W^{(g)}_{m,0}\prod_{i=1}^m \dfrac{dX(z_i)}{X(z_i)}.
	%+\delta_{g,0}\delta_{n,2}\tfrac{dX(z_1) dX(z_2)}{(X(z_1)-X(z_2))^2}.%\notag
\end{align}
We introduce the auxiliary functions $W^{(g)}_{m,0}(z_{\llbracket m\rrbracket})$ simply due to the fact that some statements are easier to express in terms of functions rather than in terms of differentials. In the presentation of the definition below we follow~\cite[Section 5.1]{ABDKS-XYSwap}.

\begin{definition}\label{def:trloopeqdef} We say that the symmetric meromorphic differentials $\omega^{(g)}_{m,0}$, $g\geq 0$, $m\geq 1$, satisfy the topological recursion on the spectral curve $(\Sigma,X(z),y(z),B(z_1,z_2))$ if
	\begin{itemize}
		\item \emph{(Initial conditions)} We have:
		\begin{align} \label{eq:initial}
			\omega^{(0)}_{1,0}(z) & = y(x) d X(z)/X(z); & \omega^{(0)}_{2,0}(z_1,z_2) &= B(z_1,z_2).
		\end{align}
	(in particular, this implies that $W^{(0)}_{1,0}(z)=y(z)$).
		%These equalities are also equivalent to
		%\begin{align}
		%	W^{(0)}_{1,0}(z)&=y(z),&W^{(0)}_{2,0}(z_1,z_2)&=\dfrac{X(z_1)X(z_2)}{X'(z_1)X'(z_2)\,(z_1-z_2)^2}-\dfrac{X(z_1)X(z_2)}{(X(z_1)-X(z_2))^2}.
		%\end{align}
		\item \emph{(Linear Loop Equations)} For any $g,m\geq0$  and $1\leq i \leq N$ 
		the function $W^{(g)}_{m+1,0}$ may have a pole at $p_i$ 
		 %for $i=1,\dots,N$ 
		 with respect to the first argument such that its principal part is skew symmetric with respect to the involution $\sigma_i$. In other words, we have that
		\begin{align}\label{eq:lloop}
			W^{(g)}_{m+1,0}(z,z_{\llbracket m\rrbracket})+ W^{(g)}_{m+1,0}(\sigma_i(z),z_{\llbracket m\rrbracket})		
		\end{align}
		is holomorphic in $z$ at $z\to p_i$.
		\item \emph{(Quadratic Loop Equations)} For any $g,m\geq 0$
		the meromorphic function
		\begin{align}\label{eq:qloop}
			W^{(g-1)}_{m+2,0}(z,z,z_{\llbracket m\rrbracket})+\sum_{\substack{g_1+g_1=g\\ I_1\sqcup I_2 = \llbracket m \rrbracket}}
			W^{(g_1)}_{|I_1|+1,0}(z,z_{I_1})	W^{(g_2)}_{|I_2|+1,0}(z,z_{I_2})	
		\end{align}
		may have a pole at $p_i$ in $z$ such that its principal part is skew symmetric with respect to the involution $\sigma_i$ for any $i=1,\dots,N$.
		\item \emph{(Projection Property)} For any $g,m\geq 0$ such that $2g-1+m>0$ the %function $\Mf_{g,n+1}(z,z_{\llbracket n\rrbracket})$ considered as a rational function in $z$ has no poles other than $p_1,\dots,p_N$. Equivalently, the
		differential $\omega^{(g)}_{m+1,0}(z,z_{\llbracket m\rrbracket})$ has no poles in $z$ other than $p_1,\dots,p_N$.
	\end{itemize}
\end{definition}

The projection property implies that $W^{(g)}_{m+1,0}(z,z_{\llbracket m\rrbracket})$ also has no poles in~$z$ other than $p_1,\dots,p_N$. But this weaker condition on $W^{(g)}_{m+1,0}$ is not sufficient for the projection property. We need a stronger requirement that $W^{(g)}_{m+1,0}$ vanishes at the poles of $dX/X$ so that $W^{(g)}_{m+1,0}(z,z_{\llbracket m\rrbracket}) \tfrac{dX(z)}{X(z)}$ is holomorphic in $z$ outside $p_1,\dots,p_N$.

The linear and quadratic loop equations together determine the principal part of the poles of $W^{(g)}_{m+1,0}(z,z_{\llbracket m\rrbracket})$ in $z$ up to the terms with at most simple poles at $p_i$. Multiplying by $dX(z)/X(z)\prod_{i=1}^m {dX(z_i)}/{X(z_i)}$ we obtain the principal part of the poles of $\omega^{(g)}_{m+1,0}(z,z_{\llbracket m\rrbracket})$ without ambiguity. Any meromorphic differential form on $\Sigma$ is determined uniquely by the principal parts of its poles and its periods. The principal relation of topological recursion is nothing but an explicit formula expressing the form $\omega^{(g)}_{m+1,0}$ in terms of the principal parts of its poles.

%\subsubsection{Generalization}\label{sec:toprecgen}

%Note that in both definitions of topological recursion the functions $x$ and $y$ are used only in the following way: the function $X$ provides the points $p_1,\dots,p_N$ as its critical points, and the function $y$ is only used in the definition of $\omega^{(0)}_1$.

%It is possible to generalize this definition, using a $1$-form $\eta$ and a function $\Theta$ on the spectral curve such that the points $p_1,\dots,p_N$ are the zeroes of $\eta$ and $\omega^{(0)}_1=\Theta\eta$.

\subsection{Spectral curve symplectic duality}\label{sec:SCSD}
%\bb{}
%В 1.3.5 в прошлой статье была определена общая симплектическая двойственность в широком контексте. И в той статье был разобран один частный случай. А сейчас разбираем другой более широкий случай.
We define here a certain specialization of the general definition of symplectic duality which was given in \cite[Section 1.3.5]{BDKS-symplectic}. %This definition is indeed more general than specialization of the general definigion which are discussed in \cite{BDKS-symplectic}.
We will refer  to this specialization by \emph{$\psi$-symplectic duality}:
\begin{definition}\label{def:psidual}
	Let $(\Sigma,X,y,B)$ be a spectral curve (according to the definitions of Section~\ref{sec:toprecedefs}). Then its \emph{$\psi$-symplectic dual} curve (where $\psi$ is a function) is $(\Sigma,w,y,B)$ with
	\begin{equation}\label{eq:wofz}
		w(z)=X(z) \, e^{-\psi(y(z))}.
	\end{equation}
\end{definition}
\begin{remark}
	Note that if $X(z)$ is meromorphic then generally $w(z)$ will not be meromorphic if $\psi$ is, say, a polynomial. Then we need to be in the context of Remark~\ref{rem:moregentoprec} for $(\Sigma,w,y,B)$ to make sense as a spectral curve, since $d\log w$ is a meromorphic differential in this example. 
\end{remark}
\begin{remark}
	Note that this construction corresponds to a group action of the space of functions $\psi$ (treated as the additive abelian group) on the set of spectral curves. For a fixed $\psi$ the respective correspondence is a duality in the sense that the inverse action transforming the $\psi$-symplectic dual spectral curve back to the original one is given by the action of the function $-\psi$.
\end{remark}

Let also
\begin{equation}\label{eq:Sdef}
		\mathcal{S}(\zeta):= \dfrac{e^{\zeta/2}-e^{-\zeta/2}}{\zeta}.
\end{equation}

It turns out that under certain conditions on $\psi$ (and also mild conditions on the original curve $(\Sigma,X,y,B)$) one can express the $n$-point functions produced by topological recursion on the curve $(\Sigma,w,y,B)$ that we denote by
\begin{align}\label{eq:function-diff-w}
	%W_{g,n}&\coloneqq D_{1}\dots D_{n} \Mf_{g,n},\quad D_{i}\coloneqq X(z_i)\tfrac{d}{dX(z_i)},\\
	\omega^{(g)}_{0,n}(z_1,\ldots,z_n)
	%&\coloneqq d_1\dots d_n \Mf_{g,n}+\delta_{g,0}\delta_{n,2}\tfrac{dX(z_1) dX(z_2)}{(X(z_1)-X(z_2))^2}\\
	&=W^{(g)}_{0,n}\prod_{i=1}^n \dfrac{dw(z_i)}{w(z_i)}
	%+\delta_{g,0}\delta_{n,2}\tfrac{dX(z_1) dX(z_2)}{(X(z_1)-X(z_2))^2}.%\notag
\end{align}
in terms of the $m$-point functions produced by topological recursion on the curve $(\Sigma,X,y,B)$ via a certain explicit formula.

Specifically, let us state the main theorem of the present paper. Note that the statement of the theorem is followed by an extended list of remarks on its conditions. 
\begin{theorem}\label{thm:mainthm}
	Let
	\begin{align}\label{eq:hatpsi}
		\hat\psi(\theta,\hbar)&=\cS(\hbar\partial_{\theta})P(\theta) +\log R(\theta),
	\end{align}
	where	
	$R$ is an arbitrary rational function and $P$ is an arbitrary polynomial, such that $\hat\psi(0,0)=0$, and let
	\begin{equation}\label{eq:psi}
		\psi(\theta)=\hat\psi(\theta,\hbar)|_{\hbar=0}= P(\theta)+\log R(\theta).
	\end{equation}
%We also further assume a \underline{\emph{generality condition}} that all zeroes and poles of $R(\theta)$ and all zeroes of $P(\theta)$ are simple. 

Let $(\Sigma,X,y,B)$ be a spectral curve with the  \underline{\emph{meromorphy condition}}, that is, we assume that $d\log X$ and $y$ are meromorphic, 
and let $(\Sigma,w,y,B)$ be its $\psi$-symplectic dual curve according to Definition~\ref{def:psidual}.

We also further assume the \underline{\emph{generality condition}} that all zeroes and poles of $R(\theta)$, all zeroes of $P(\theta)$, and all zeroes of $dw$ are simple, and zeroes of $dX$ do not coincide with zeroes of $dw$.

Then there exist universal expressions 
 \begin{align}\label{eq:expr}
\Expr^{\hat\psi}_{g,n}\Big(\big\{W^{(h)}_{m,0}\big\}_{2h-2+m\leq 2g-2+n}, \big\{\tfrac{dX_i}{X_i},\tfrac{dw_i}{w_i}\big\}_{i=1,\dots,n}, \big\{\tfrac{X_iX_j}{(X_i-X_j)^2}\big\}_{\substack {i,j=1,\dots,n\\ i\not=j}},\Big),
 \end{align}
whose explicit forms are given below, such that the functions given by these expressions are meromorphic and satisfy the initial conditions and the linear and quadratic loop equations~\eqref{eq:initial},~\eqref{eq:lloop}, and~\eqref{eq:qloop} for the spectral curve data $(\Sigma,w,y,B)$. 

Introduce some additional \underline{\emph{enumerative-type spectral curve condition}} that there exists a point $O\in \Sigma$ such that $X(O)=0$ and $X$ is a local coordinate near $O$, that is, $dX(O)\not=0$, and moreover we assume that at all poles of $y$ either $X$ is finite and not ramified or $dX/X$ has a simple pole. Under this extra condition we claim that expressions~\eqref{eq:expr} multiplied by $\prod_{i=1}^n \frac{dw(z_i)}{w(z_i)}$ satisfy the projection property. In other words, 
the $n$-point functions $W^{(g)}_{0,n}$ % (see \eqref{eq:function-diff})
produced by topological recursion on $(\Sigma,w,y,B)$ are expressed in terms of the $m$-point functions $W^{(g)}_{m,0}$ %(see~\eqref{eq:function-diff})
 produced by topological recursion on $(\Sigma,X,y,B)$ as
\begin{align}\label{eq:WExpr}
	&W^{(g)}_{0,n}%+\delta_{(g,n),(0,1)}\big(-\tfrac{dy_1}{y_1}\big) \\ \nonumber &
	=\Expr^{\hat\psi}_{g,n}\Big(\big\{W^{(h)}_{m,0}\big\}_{2h-2+m\leq 2g-2+n}, \big\{\tfrac{dX_i}{X_i},\tfrac{dw_i}{w_i}\big\}_{i=1,\dots,n}, \big\{\tfrac{X_iX_j}{(X_i-X_j)^2}\big\}_{\substack {i,j=1,\dots,n\\ i\not=j}},\Big).
\end{align}
\end{theorem}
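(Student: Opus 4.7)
The strategy is to construct $\Expr^{\hat\psi}_{g,n}$ explicitly as a decorated graph sum, verify the initial conditions, verify the linear and quadratic loop equations at each zero of $dw$, and finally---under the enumerative-type assumption---verify the projection property. The expected shape of the formula generalizes the $XY$-swap formula of~\cite{ABDKS-XYSwap} and the fully simple map formula of~\cite{BDKS-FullySimple}: external legs $z_i$ are decorated by factors built from $dX_i/X_i$, $dw_i/w_i$ and derivatives of $\hat\psi$ evaluated at $y(z_i)$, with the action of $\cS(\hbar\partial_\theta)$ in~\eqref{eq:hatpsi} redistributing $\hbar$-weight across genera; internal vertices carry the original correlators $W^{(h)}_{m,0}$ and internal edges carry the combinatorial kernel $X_iX_j/(X_i-X_j)^2$. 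Meromorphy in each $z_i$ follows from meromorphy of $d\log X$ and $y$ combined with rationality of $R$ and polynomiality of $P$. The initial conditions for $(g,n)=(0,1)$ and $(0,2)$ reduce to the identity $d\log w=d\log X-\psi'(y)\,dy$ and a short direct check on the two-point contribution.

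The core step is the verification of the linear and quadratic loop equations~\eqref{eq:lloop},~\eqref{eq:qloop} at the critical points $q$ of $w$, i.e., at the new ramification points of the dual spectral curve. Near such a $q$, the equation $d\log w=0$ reads $d\log X=\psi'(y)\,dy$, and the generality condition (simple zeros of $R$, $P$, $dw$, together with non-coincidence of zeros of $dX$ and $dw$) guarantees that $dy(q)\neq 0$, so $y$ can serve as a local coordinate at $q$ and the deck involution for $w$ acts as $y\mapsto -y + \text{higher order}$ in this chart. The plan is to substitute the graph-sum expression into~\eqref{eq:lloop},~\eqref{eq:qloop} and show that the symmetric parts are regular at $q$; the necessary cancellations will come from two inputs: (i) the linear and quadratic loop equations for the $W^{(h)}_{m,0}$ themselves at the critical points of $X$ (which hold because $(\Sigma,X,y,B)$ is a topological recursion spectral curve by hypothesis), and (ii) purely algebraic identities for the vertex decorations built from $\hat\psi$, $X$, and $w$, which promote the relations at zeros of $dX$ into relations at zeros of $dw$. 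This combinatorial-residue bookkeeping will be the \emph{main obstacle}: the number of graphs grows quickly, and the algebraic identity for the vertex weights---which must hold to all orders in $\hbar$---is precisely the step where the specific form $\cS(\hbar\partial_\theta)P(\theta)+\log R(\theta)$ of $\hat\psi$ gets used.

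For the projection property, the plan is to analyze the poles of the proposed $\omega^{(g)}_{0,n}(z_{\llbracket n\rrbracket})$ in $z_1$ outside the zeros of $dw$. Such poles can only come from $dw_1/w_1$ (zeros and poles of $w$), from poles of the building blocks $W^{(h)}_{m,0}$ in $z_1$ (zeros of $dX$ and poles of $y$), or from diagonal coincidences $z_1=z_i$. The existence of a point $O$ with $X(O)=0$ and $dX(O)\neq 0$ ensures regular behaviour of the formula at the $z_1$-zero of $w$, and the hypothesis on poles of $y$ (that $X$ is either finite and unramified, or $dX/X$ has a simple pole) exactly cancels the candidate contributions at poles of $y$. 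The diagonal poles at $z_1=z_i$ are handled by the coincidence of the two two-point functions established in the initial-conditions step. Combining these local regularity statements with the original projection property of the $W^{(h)}_{m,0}$ at zeros of $dX$ then yields the desired regularity of $\omega^{(g)}_{0,n}$ away from the zeros of $dw$, completing the theorem.
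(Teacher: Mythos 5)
Your proposal is a plan rather than a proof, and the gap sits exactly at the step you yourself flag as the ``main obstacle'': verifying the linear and quadratic loop equations at the zeros of $dw$ by substituting a graph sum into~\eqref{eq:lloop},~\eqref{eq:qloop} and hunting for cancellations. You do not supply the identity that would make this work, and the paper does not proceed this way at all. Its key device is to realize the $W^{(h)}_{m,0}$ as Fock-space correlators of an operator $Z$ built from the expansion at the point $O$, to \emph{define} the candidate dual functions as correlators with $\cD_{\hat\psi}$ inserted (the closed graph-sum formula~\eqref{eq:SymplDualExplicit} is then imported from the known Theorem~4.14 of the fully-simple-maps paper, not re-derived), and, crucially, to introduce interpolating $(m,n)$-point functions $W^{(g)}_{m,n}$ with mixed $X$- and $w$-variables together with two-sided recursion relations between the auxiliary functions $\cW^{X}$ and $\cW^{w}$ (Propositions~\ref{prop:Wmnrec},~\ref{prop:cWmnrec}). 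The ``promotion'' of relations at zeros of $dX$ to relations at zeros of $dw$ is then an induction on $2g-2+m+n$: from the $u^0$- and $u^1$-coefficients of~\eqref{eq:WXtoW} one gets $\tfrac{d\log w}{d\log X}W^{(g)}_{m,1}\in\Xi^X$ and $W^{(0)}_{1,0}\tfrac{d\log w}{d\log X}W^{(g)}_{m,1}\in\Xi^X$, and a parity-of-pole-order argument forces regularity of $W^{(g)}_{m,1}$ at zeros of $dX$; the loop equations on the $w$-side then come for free from the shape of~\eqref{eq:WwtoW}. Without this (or an equivalent) structural input, your ``purely algebraic identities for the vertex decorations'' remain an unproved assertion, and the all-orders-in-$\hbar$ bookkeeping over the graph sum is precisely what the paper's method is designed to avoid.

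Two further points. First, you locate the use of the special form $\hat\psi=\cS(\hbar\partial_\theta)P+\log R$ in the loop-equation step; in the paper the loop equations do not need it --- it is the \emph{projection property} where the relation between $\hat\psi$ and $\psi$, the rationality of $R$ and polynomiality of $P$ are essential, to cancel the candidate poles of $\omega^{(g)}_{0,n+1}$ at poles of $y$ and at the zeros and poles of $R(y)$ (the analogues of Lemmas~4.3, 4.4 and 4.6 of the KP topological recursion paper, which is exactly why the enumerative-type condition on poles of $y$ is imposed). Second, your projection-property paragraph asserts that these contributions ``exactly cancel'' without argument; those cancellations are nontrivial and constitute the second substantive part of the paper's proof, so they cannot be dismissed as routine. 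As it stands, the proposal identifies the right targets (initial conditions, loop equations at zeros of $dw$, pole analysis for the projection property) but is missing the ideas that make both hard steps go through.
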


Several remarks on the statement of the theorem are in order.

\begin{remark} We understand~\eqref{eq:WExpr} as a relation between $\omega^{(g)}_{0,n}$ and $\omega^{(g)}_{m,0}$, it is just more convenient to write it in terms of the auxiliary functions $W^{(g)}_{0,n}$ and $W^{(g)}_{m,0}$; it is certainly possible to write it directly in terms of $\omega$'s, but the formula for $\Expr^{\hat\psi}_{g,n}$ that we give below would be slightly more bulky.
\end{remark}

\begin{remark} \label{rem:generality}
	The \emph{generality condition} can actually be lifted. In what concerns the loop equations, the discussion will be analogous to~\cite[Section 5.2]{ABDKS-XYSwap}. And as far as the full topological recursion statement is concerned, one has to replace the topological recursion with the \emph{Bouchard--Eynard recursion} of~\cite{BouchardEynard}. This can be done in a completely analogous way to how it is discussed in \cite[Section~5.1]{BDKS-toporec-KP}, based on the taking limits procedure discussed in detail in~\cite{BBCKS}.
\end{remark}

\begin{remark} The \emph{meromorphy condition} can be relaxed for some particular choices of $\hat\psi$. A notable example is Family II in~\cite{BDKS-toporec-KP}, which is relevant for the quantum knot invariants, cf.~\cite{Homfly-2}.
\end{remark}

\begin{remark}\label{rem:zero}
	Note that many different spectral curves produce one and the same collection of $m$-point differentials $\omega^{(g)}_{m,0}$. Indeed, if one changes $X$ and $y$ in some way such that $\omega^{(0)}_{1,0}$ and the zeroes of $dX$ remain the same, and the branches of $X$ as a ramified covering swap in involution, then all $\omega^{(g)}_{m,0}$'s remain the same as well, as evident from~\eqref{eq:CEO-TR-original}. Thus, if we are interested only in the collection of $\omega^{(g)}_{m,0}$'s, then we can choose some different spectral curve which produces the same collection via topological recursion. 
	
	To this end, we remark that the \emph{enumerative-type spectral curve condition} is justified by our main application,~\cite[Conjecture 4.4]{BDKS-FullySimple}. However, the result on topological recursion might hold even if it is not the case, though we would have to sharpen the \emph{meromorphy condition} instead. % in some $\hat\psi$'s. 
	
	For instance, we can drop the condition that there exists this special point $O$. Then for any $O$ where $X(O)=c\not=0$ and $dX(O)\not=0$, we can use a new spectral curve data with $\tilde X = X-c$ instead of $X$, %, for some constant $c$, 
	with an appropriate change of $y$ such that $\omega^{(0)}_{1,0}=ydX/X$ remains unchanged. It is always possible to choose $O$ such that the function $\tilde y = y (X-c)/X$ is regular there. Then we start with this new curve $(\Sigma,\tilde X,\tilde y,B)$ which corresponds to the same $m$-point differentials $\omega^{(g)}_{m,0}$, and apply our theorem for it. Since in general we want $\tilde y$ to be meromorphic, we have to assume that $X$ is meromorphic in this modification of the setup. 
\end{remark}

The summary of the remarks above can be formulated as follows: there is a variety of statements similar to Theorem~\ref{thm:mainthm}, where the conditions can be modified in this or that way. Our choice of conditions is tailored for a particular application that we have in mind, a generalization of~\cite[Conjecture 4.4]{BDKS-symplectic} in the spirit of symplectic duality as in~\cite[Section 1.3.5]{BDKS-symplectic}, and we choose some simplifications that always hold in this case to refer as much as possible to the arguments proving the projection property in~\cite{BDKS-toporec-KP} and to avoid extra analysis of similar kind. However, the statement of the theorem is still much more general than this application. The message that we want to deliver here is that for a physically motivated application that would involve symplectic duality but possibly a slightly different set of conditions a variation of our argument would still work, just some extra analysis of emerging poles is needed. 

Theorem~\ref{thm:mainthm} is proved in Section~\ref{sec:toprec} below. The proof works as follows: we take the $n$-point functions given by~\eqref{eq:WExpr} and prove that the loop equations and the projection property (w.r.t. the curve $(\Sigma,w,y,B)$) hold (see Section~\ref{sec:toprecloopeq}). Explicitly, Equation~\eqref{eq:WExpr} is given by	
	\begin{align} \label{eq:SymplDualExplicit}
		& W_{0,n}^{(g)} (w_{\llbracket n \rrbracket}) %+\delta_{(g,n),(0,2)}\dfrac{w_1w_2}{(w_1-w_2)^2}
		=
		[\hbar^{2g}] \sum_{\Gamma} \frac{
			\hbar^{2g(\Gamma)}}{|\mathrm{Aut}(\Gamma)|} \prod_{i=1}^n
		\sum_{k_i=0}^\infty ( w_i\partial_{w_i})^{k_i} [v_i^{k_i}]
		\\ \notag &
		\frac{d\log X_i}{d\log w_i}\;
		\restr{\theta_i}{ W^{(0)}_{1,0} (X_i)}
		\sum_{r_i=0}^\infty \Big(\partial_{\theta_i} + v_i\psi'(\theta_i)\Big)^{r_i}
		e^{v_i\left(\frac{\cS(\hbar v_i \partial_{\theta_i})}{\cS(\hbar \partial_{\theta_i})}\hat\psi(\theta_i)-\psi(\theta_i)\right)}
		[u_i^{r_i}]
		\\ \notag &
		\frac{1}{ u_i \cS(\hbar u_i)} e^{u_i \cS(\hbar u_i X_i \partial_{X_i}) \sum_{\tilde g=0}^\infty \hbar^{2\tilde g} W^{(\tilde g)}_{1,0} (X_i)-u_i W^{(0)}_{1,0} (X_i)}
		\\ \notag &
		\prod_{e\in E(\Gamma)} \prod_{j=1}^{|e|%\geq 2
		} \restr{(\tilde u_j, \tilde X_j) }{(u_{e(j)},X_{e(j)})} \tilde u_j \cS(\hbar \tilde u_j \tilde X_j \partial_{\tilde X_j})  \sum_{\tilde g=0}^\infty \hbar^{2\tilde g}\tilde W^{(\tilde g)}_{|e|,0}(\tilde X_{\llbracket |e|\rrbracket})
		\\ \notag &
		+\delta_{n,1}[\hbar^{2g}] \sum_{k=0}^\infty (w_1\partial_{w_1})^k [v^{k+1}]
		\restr{\theta}{ W^{(0)}_{1,0} (X_1)} e^{v\left(\frac{\cS(\hbar v \partial_{\theta})}{\cS(\hbar \partial_{\theta})}\hat\psi(\theta)-\psi(\theta)\right) } (w_1\partial_{w_1})W^{(0)}_{1,0} (X_1)
		\\ \notag &
		+ \delta_{(g,n),(0,1)} W^{(0)}_{1,0} (X_1).
	\end{align}
	Here and below we use the notation $[x^k]\sum_{\ell=-\infty}^\infty f_\ell x^\ell := f_k$; and we set 
	$\tilde W^{(0)}_{2,0}(\tilde X_1,\tilde X_2) =  W^{(0)}_{2,0}(\tilde X_1,\tilde X_2)- \frac{\tilde X_1 \tilde X_2}{(\tilde X_1-\tilde X_2)^2}$ if $e(1)=e(2)$, and $\tilde W^{(0)}_{2,0}(\tilde X_1,\tilde X_2) =  W^{(0)}_{2,0}(\tilde X_1,\tilde X_2) $ otherwise. For all $(g,n)\not=(0,2)$ we simply have $\tilde W^{(g)}_{n,0} =  W^{(g)}_{n,0}$.
	The sum is taken over all connected graphs $\Gamma$ with $n$ labeled vertices and multiedges of index $\geq 2$, where the index of a multiedge $e$ is the number of its ``legs'' and we denote it by $|e|$. For a multiedge $e$ with index $|e|$  we control its attachment to the vertices by the associated map $e\colon \llbracket |e| \rrbracket \to %(1,\dots,1,2,\dots,2,\dots,n,\dots,n)
	\llbracket n \rrbracket
	$ that we denote also by $e$, abusing notation (so $e(j)$ is the label of the vertex to which the $j$-th ``leg'' of the multiedge $e$ is attached). Note that this map can be an arbitrary map from $\llbracket |e| \rrbracket$ to $\llbracket n \rrbracket$; in particular, it might not be injective, i.e.~we allow a given multiedge to connect to a given vertex several times. By $g(\Gamma)$ we denote the first Betti number of $\Gamma$. By $\restr{a}{b}$ we denote the operator of substitution $a\to b$, that is, $\restr{a}{b}f(a)=f(b)$ for any function $f$. 
	%The function $\cS(z)$ is defined as
	%\begin{align}
	%	\cS(z)\coloneqq \frac {e^{z/2}-e^{-z/2}} z.
	%\end{align}

	\begin{remark} For each $g\geq 0$, $n\geq 1$, the right hand side of Equation~\eqref{eq:SymplDualExplicit} is manifestly a finite sum of finite products of differential operators applied to $W^{(\tilde{g})}_{m,0}$ ($\tilde g\geq 0$, $m\geq 1$). %$2\tilde g -2+m\geq 0$.
	\end{remark}

	%\begin{remark} The last line of Equation~\eqref{eq:SymplDualExplicit} is a regularization term for the next to the last line, and the last two lines together form a regularization term to the main part of the formula.
	%\end{remark}
	
	\begin{remark}
		In the special case $(g,n)=(0,1)$
		Equation~\eqref{eq:SymplDualExplicit} implies that $W^{(0)}_{0,1} = W^{(0)}_{1,0}$. % Since $ -\omega^{(0)}_{1,0} \frac{x_1}{dx_1}= x_1y_1$, we obtain $\omega^{(0)}_{0,1}=-x_1dy_1$.
		It is also straightforward to see from Equation~\eqref{eq:SymplDualExplicit} that $\omega^{(0)}_{0,2} = \omega^{(0)}_{2,0}$.	
	\end{remark}

\subsection{Generating functions of generalized fully simple maps}\label{sec:mapsdef}
Now let us introduce the so-called \emph{generalized fully simple maps generating functions}. In~\cite[Conjecture~4.4]{BDKS-FullySimple} it was conjectured that these %the
generating functions %counting these generalized fully simple maps 
satisfy topological recursion. We prove an extended version of this conjecture here in the present paper as a corollary of our main theorem.

For the definition of ordinary and fully simple maps as graphs on surfaces we refer to~\cite{borot2021topological}, see also~\cite{borot2019relating}. The generating functions counting these ordinary and fully simple maps can be naturally represented as the so-called Fock spaces correlators. These Fock space correlators can, in turn, be generalized in a natural way to the generalized ordinary (respectively, generalized fully simple) maps case, as we call it. In some cases these generalized ordinary and fully simple maps can also be defined in terms of more complicated graphs on surfaces, see~\cite{BCGFLS-inprep}.

\begin{remark}
	Note that the situation is as follows: we start with a counting problem for certain well-defined objects (maps and fully simple maps), then we take the respective generating functions, and then we note that they admit a natural generalization, and then we study these generalized functions. The question is whether these generalized generating functions still correspond to the counting problem for some more general objects. The paper~\cite{BCGFLS-inprep} partially answers this, describing the respective objects for some special cases of the generalization.
\end{remark}

Let us give the Fock space formulations of the problems of counting generalized ordinary and generalized fully simple maps here.

 Consider the charge zero Fock space~$\mathcal{V}_0$ whose bosonic realization is presented as $\mathcal{V}_0\cong \mathbb{C}[[q_1,q_2,q_3,\dots]]$. Define the operators $J_k$, $k\in\Z$, acting on $\mathcal{V}_0$ as $J_k=k\partial_{q_k}$, $J_{-k}=q_k$  (the operator of multiplication by $q_k$) if $k>0$, and $J_0=0$. Given a formal power series $\hat\psi_1(\theta,\hbar)$ in $\theta$ and $\hbar^2$ such that $\hat\psi_1(0,0)=0$, we introduce also the operator $\mathcal{D}_{\hat\psi_1}$ acting on $\mathcal{V}_0$ diagonally in the basis of Schur functions indexed by partitions $\lambda$, $|\lambda| \geq 0$,
\begin{equation}
	\mathcal{D}_{\hat\psi_1}\colon s_\lambda \mapsto e^{\sum_{(i,j)\in \lambda} \hat\psi_1(\hbar\,(j-i),\hbar)} s_\lambda. 	
\end{equation}
Note that $q_k$ here serve as the so-called power sum variables for the Schur functions.
Let $| 0 \rangle$ denote $1\in \mathcal{V}_0$ and let $ \langle 0 | \colon \mathcal{V}_0 \to \mathbb{C}$ be the extraction of the constant term. %, that is Schur functions $s_\lambda(q_1,q_2,\dots)$ form a basis in the space $\mathcal{V}_0$.

Then (cf.~\cite{BDKS-FullySimple}), for parameters $t_1,t_2,\dots$ and $s_1,s_2,\dots$,
\begin{align}\label{eq:Wmaps}
%	\omega^{(g)}_{m,0} := [\hbar^{2g-2+m}]\VEV{ \prod_{i=1}^m \Big(\sum_{j=1}^\infty J_j X_i^{j} \Big) e^{\sum_{i=1}^d \frac{t_iJ_i}{i\hbar}}\cD_{\hat\psi}		e^{\sum_{i=1}^e \frac{s_iJ_{-i}}{i\hbar}}		\Big)	}  \prod_{i=1}^n \left(-\frac{dX_i}{X_i}\right)
W^{\mathrm{MAPS},(g)}_{m,0}=[\hbar^{2g-2+m}]\VEVc{ \prod_{i=1}^m \Big(\sum_{j=-\infty }^\infty J_j X_i^{j} \Big) e^{\sum_{i=1}^\infty \frac{t_iJ_i}{i\hbar}}\cD_{\hat\psi_1}		e^{\sum_{i=1}^\infty \frac{s_iJ_{-i}}{i\hbar}}		}  
\end{align}
is the generating function for counting ``generalized ordinary maps'' of genus $g$ with $n$ ``distinguished polygons'',
and
\begin{align} \label{eq:Wfsmaps}
%	\omega^{(g)}_{0,n} := [\hbar^{2g-2+m}]	\VEV{ \prod_{i=1}^n \Big(\sum_{j=1}^\infty J_j w_i^{j} \Big) \cD_{\hat\psi_1}  e^{\sum_{i=1}^d \frac{t_iJ_i}{i\hbar}}\cD_{\hat\psi}		e^{\sum_{i=1}^e \frac{s_iJ_{-i}}{i\hbar}}	\Big)	}  \prod_{i=1}^n \left(-\frac{dw_i}{w_i}\right).
W^{\mathrm{MAPS},(g)}_{0,n}= [\hbar^{2g-2+n}]	\VEVc{ \prod_{i=1}^n \Big(\sum_{j=-\infty }^\infty J_j w_i^{j} \Big) \cD_{\hat\psi}  e^{\sum_{i=1}^\infty \frac{t_iJ_i}{i\hbar}}\cD_{\hat\psi_1}		e^{\sum_{i=1}^\infty \frac{s_iJ_{-i}}{i\hbar}}		},
\end{align}
where $\hat\psi(\theta,\hbar)$ is another formal power series in $\theta$ and $\hbar^2$ such that $\hat\psi(0,0)=0$,
is the generating function for counting ``generalized fully simple maps'' of genus $g$ with $n$ ``distinguished polygons''. Here ``$\phantom{}^\circ$'' stands for taking the ``connected correlator'' via the inclusion-exclusion formula, see e.g.~\cite{BDKS-FullySimple} for the precise definition of that.
%with $\hat \psi_1=-\hat \psi$. 

In \cite{BDKS-FullySimple} it was conjectured that under certain conditions on $\hat\psi_1$ and $\hat\psi$ the sets of functions $W^{\mathrm{MAPS},(g)}_{m,0}$ and $W^{\mathrm{MAPS},(g)}_{0,n}$ are produced by topological recursion on two different spectral curves related by~\eqref{eq:wofz} .

\begin{remark}
	Note that \cite[Conjecture~4.4]{BDKS-FullySimple}, as formulated in that paper, deals with a quite restricted situation where $\hat\psi_1(\theta)=-\hat\psi(\theta)=\log(P(\theta))$, with $P(\theta)$ being a polynomial (and, in particular, $\hat\psi_1$ and $\hat\psi$ do not depend on $\hbar$). However, it turns out that the statement of this conjecture holds in a more general situation which is described in the next two theorems given below. 
\end{remark}

The extended version of the ordinary maps part of this conjecture was proved in~\cite{BDKS-symplectic} (see also~\cite{bonzom2022topological} for a different proof of the ordinary maps part of \cite[Conjecture~4.4]{BDKS-FullySimple}). In the present paper we prove the extended version of the fully simple part of this conjecture. Let us recall the theorem regarding the ordinary maps:
\begin{theorem}[{\cite[Theorem~5.7]{BDKS-symplectic}}] \label{thm:mapsrec}
	Let $W^{\mathrm{MAPS},(g)}_{m,0}$ be the ones of~\eqref{eq:Wmaps} with $t_i=0,\;i>d,$ and $s_i=0,\;i>e,$ for some fixed $d,e\in\mathbb{Z}_{\geq 1}$, and for %only a finite number of nonzero $t_i$ and $s_i$, and with
	\begin{align}
		\hat\psi_1=\log R_1(\theta) + \cS(\hbar\partial_{\theta})P_1(\theta),
	\end{align}
	where $R_1$ is an arbitrary rational function, $P_1$ is an arbitrary polynomial, and $\cS$ is defined in~\eqref{eq:Sdef}.
%	\begin{equation}\label{eq:Sdef}
%		\mathcal{S}(z):= \dfrac{e^{z/2}-e^{-z/2}}{z}.
%	\end{equation}
	We also further assume the \emph{generality condition}, i.e. that all zeroes and poles of $R(\theta)$, all zeroes of $P(\theta)$, and all zeroes of $dX$ are simple. 
	
	Then there exist functions $X(z)$ and $\Theta(z)$ such that the $m$-point differentials
	\begin{equation}
		\omega^{\mathrm{MAPS},(g)}_{m,0}(z_1,\dots,z_m) := W^{\mathrm{MAPS},(g)}_{m,0}(X(z_1),\dots,X(z_n))\;\prod_{i=1}^m\dfrac{dX(z_i)}{X(z_i)}%+\delta_{g,0}\delta_{n,2}\tfrac{dX(z_1) dX(z_2)}{(X(z_1)-X(z_2))^2} 
	\end{equation}
	satisfy the topological recursion for the spectral curve
	\begin{align}\label{eq:mapscurve}		
		\left(\mathbb{C}P^1,X(z),\Theta(z),dz_1dz_2/(z_1-z_2)^2\right).
		%	X=X(z) %1/X(z)
		%	,\quad y=\Theta(z)%-X(z)\,\Theta(z)
		%-\phi(\theta(z))\,X(z)
	\end{align}
	The functions $X(z)$ and $\Theta(z)$ can be unambiguously obtained from a certain set of implicit algebraic equations described in~\cite[Section~4.1]{BDKS-symplectic} (we omit them here for brevity); moreover,  $\Theta(z)$ is a rational function, while $d\log X(z)$ is a rational 1-form.
\end{theorem}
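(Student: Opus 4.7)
Since this statement is quoted verbatim as Theorem~5.7 of \cite{BDKS-symplectic}, the plan is to recall the strategy of that paper, which proceeds in three steps. First, I would reinterpret the right-hand side of \eqref{eq:Wmaps} as the connected $m$-point correlator of a hypergeometric (Orlov--Scherbin) KP tau function: $\cD_{\hat\psi_1}$ acts diagonally in the Schur basis with content-product eigenvalues $\prod_{(i,j)\in\lambda} e^{\hat\psi_1(\hbar(j-i),\hbar)}$, the exponentials of $J_{\pm i}$ provide the KP and dual-KP times $t_\bullet, s_\bullet$, and each insertion $\sum_{j} J_j X_i^j$ is a standard fermionic current. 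The boson--fermion correspondence and Wick's theorem then yield a closed graph-sum formula for $W^{\mathrm{MAPS},(g)}_{m,0}$ of the same flavour as the right-hand side of \eqref{eq:SymplDualExplicit}.

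Second, I would extract the spectral curve data $(X(z),\Theta(z))$ from the $\hbar\to 0$ dispersionless limit. The one-point function $W^{\mathrm{MAPS},(0)}_{1,0}$ is computed by a saddle-point analysis of the associated Schur-measure sum; as spelled out in \cite[Section~4.1]{BDKS-symplectic}, this produces an implicit algebraic system that can be resolved rationally on $\Sigma = \C P^1$ to give $\Theta(z)$ and $d\log X(z)$ as rational functions. The assumption that only finitely many $t_i, s_i$ are nonzero is what ensures this rationality, and the generality condition on $R_1$ and $P_1$ transfers to simplicity of the zeroes of $dX$.

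Finally, I would invoke the general topological-recursion theorem of \cite{BDKS-toporec-KP} for hypergeometric KP tau functions with $\hat\psi_1$ of the form $\log R_1 + \cS(\hbar\partial_\theta) P_1$: the linear and quadratic loop equations follow from Fock-space commutation relations (Virasoro-type Ward identities) applied to the closed graph-sum formula from the first step. The principal technical obstacle is the projection property, which requires a careful pole analysis of the explicit expression for $\omega^{\mathrm{MAPS},(g)}_{m,0}$ away from the critical points of $X$; this is precisely where the rational-plus-$\cS$-polynomial shape of $\hat\psi_1$ is used, to control the singularities of the basic building blocks and verify that $\omega^{\mathrm{MAPS},(g)}_{m+1,0}(z, z_{\llbracket m\rrbracket})$ has no poles in $z$ other than the ramification points of $X$, thereby producing topological recursion on the spectral curve \eqref{eq:mapscurve}.
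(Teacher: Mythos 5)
This theorem is not proved in the present paper at all: it is imported verbatim by citation from \cite[Theorem~5.7]{BDKS-symplectic}, and your proposal, which explicitly defers to the strategy of that paper (hypergeometric/Orlov--Scherbin tau-function correlators on the Fock space, spectral curve data $X(z)$, $\Theta(z)$ from the implicit algebraic equations of its Section~4.1, and loop equations plus projection property in the style of \cite{BDKS-toporec-KP}), is essentially the same treatment and a broadly accurate outline of the cited proof. The only quibble is that in \cite{BDKS-symplectic} the curve is extracted from explicit closed formulas for the $n$-point functions in the $\hbar\to 0$ limit rather than a genuine saddle-point analysis of a Schur-measure sum, but this detail concerns the external paper and does not affect the substance.
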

\begin{remark}
	We give~\cite[Theorem~5.7]{BDKS-symplectic} here in a slightly different form which is more convenient for our purposes in the present paper. The difference is in the spectral curve, as we use a slightly different definition of topological recursion in the present paper, and also we use the freedom in the choice of $X$ and $y$ as discussed in~\cite[Section~1.1.2]{BDKS-symplectic}. 
\end{remark}

\begin{remark}
	Note that in the statement of the theorem (and the one below) we rename the function $y(z)$ by $\Theta(z)$; the purpose of this is to connect the notation to the one used in~\cite{BDKS-symplectic}.
\end{remark}

\begin{remark} Remark~\ref{rem:generality} on the generality condition holds here as well.
\end{remark}

As a corollary of Theorem~\ref{thm:mainthm}, which is the main statement of the present paper, we get the following theorem on topological recursion for generalized fully simple maps:
\begin{theorem}\label{Thm:fsmapsrec}
	Let $W^{\mathrm{MAPS},(g)}_{0,n}$ be the ones of~\eqref{eq:Wfsmaps} with $t_i=0,\;i>d,$ and $s_i=0,\;i>e,$ for some fixed $d,e\in\mathbb{Z}_{\geq 1}$, and for %only a finite number of nonzero $t_i$ and $s_i$, and with
	\begin{align}
		\hat\psi&=\log R(\theta) + \cS(\hbar\partial_{\theta})P(\theta),\\ 		
		\hat\psi_1&=\log R_1(\theta) + \cS(\hbar\partial_{\theta})P_1(\theta),
	\end{align}
	where $R,R_1$ are arbitrary rational functions, $P,P_1$ are arbitrary polynomials, and $\mathcal{S}(z)$ is given by~\eqref{eq:Sdef}; and let	
	\begin{equation}
		w(z) = X(z) \, e^{- \psi(\Theta(z))},
	\end{equation}
	where $X(z)$ and $\Theta(z)$ are the ones from Theorem~\ref{thm:mapsrec}, applied for the case with the same $t$'s, $s$'s and $\psi_1$.
	
	We also further assume the \emph{generality condition}, i.e. that all zeroes and poles of $R(\theta),R_1(\theta)$, all zeroes of $P(\theta),P_1(\theta)$ and all zeroes of $dw$ are simple, and zeroes of $dX$ do not coincide with zeroes of $dw$. 
	
	Then the $n$-point differentials
	\begin{equation}
		\omega^{\mathrm{MAPS},(g)}_{0,n}(z_1,\dots,z_n) := W^{\mathrm{MAPS},(g)}_{0,n}(w(z_1),\dots,w(z_n))\;\prod_{i=1}^n\dfrac{dw(z_i)}{w(z_i)}%+\delta_{g,0}\delta_{n,2}\tfrac{dw(z_1) dw(z_2)}{(w(z_1)-w(z_2))^2}
	\end{equation}
	satisfy the topological recursion on the spectral curve
	\begin{align}\label{eq:fscurve}				
		\left(\mathbb{C}P^1,w(z),\Theta(z),dz_1dz_2/(z_1-z_2)^2\right).
		%	X_1=\log w(z),\quad y_1(z)=y(z),%-\phi(\theta(z))\,X(z)
	\end{align}
%	where.
\end{theorem}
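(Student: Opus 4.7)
The plan is to deduce Theorem~\ref{Thm:fsmapsrec} by applying Theorem~\ref{thm:mainthm} to the spectral curve furnished by Theorem~\ref{thm:mapsrec}, so that the real work reduces to identifying the Fock-space definition of $W^{\mathrm{MAPS},(g)}_{0,n}$ in \eqref{eq:Wfsmaps} with the universal expression $\Expr^{\hat\psi}_{g,n}$ evaluated on the Fock-space functions $W^{\mathrm{MAPS},(g)}_{m,0}$ of \eqref{eq:Wmaps}.

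First I would take $(\Sigma, X, y, B) = (\C P^1, X(z), \Theta(z), dz_1dz_2/(z_1-z_2)^2)$ with $X$ and $\Theta$ the ones provided by Theorem~\ref{thm:mapsrec} (applied with the same $t$'s, $s$'s, and $\hat\psi_1$). The meromorphy condition of Theorem~\ref{thm:mainthm} is immediate from the explicit assertion that $\Theta$ is rational and $d\log X$ is a rational $1$-form. The $\psi$-symplectic dual curve of Definition~\ref{def:psidual} is then precisely the curve \eqref{eq:fscurve}, since $w(z) = X(z)e^{-\psi(\Theta(z))}$ by construction. The enumerative-type spectral curve condition follows from the explicit rational construction of $X$ and $\Theta$ in \cite[Section~4.1]{BDKS-symplectic}: there is a distinguished point $O \in \C P^1$ with $X(O)=0$, $dX(O)\neq 0$ corresponding to the vacuum in the Fock space formulation, and the poles of $\Theta$ are either over $X$ finite and unramified or over points where $d\log X$ has a simple pole. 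The generality condition is part of the hypotheses of Theorem~\ref{Thm:fsmapsrec}. Thus all hypotheses of Theorem~\ref{thm:mainthm} are met, and it provides $n$-point functions on the spectral curve \eqref{eq:fscurve} that satisfy topological recursion and coincide with $\Expr^{\hat\psi}_{g,n}$ evaluated on the $W^{\mathrm{MAPS},(g)}_{m,0}$.

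What remains is the Fock-space identity
\[
W^{\mathrm{MAPS},(g)}_{0,n} \;=\; \Expr^{\hat\psi}_{g,n}\Big(\big\{W^{\mathrm{MAPS},(h)}_{m,0}\big\}_{2h-2+m\leq 2g-2+n}, \big\{\tfrac{dX_i}{X_i},\tfrac{dw_i}{w_i}\big\}_i, \big\{\tfrac{X_iX_j}{(X_i-X_j)^2}\big\}_{i\ne j}\Big).
\]
To establish it, I would conjugate the operator $\cD_{\hat\psi}$ through the boundary insertions $\prod_{i=1}^n\bigl(\sum_j J_j w_i^j\bigr)$ in \eqref{eq:Wfsmaps}, using its diagonal action on Schur functions with eigenvalue $\exp\bigl(\sum_{(i,j)\in\lambda}\hat\psi(\hbar(j-i),\hbar)\bigr)$. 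The standard device is to rewrite the Schur-content weight as an exponential of shift operators acting on $\hat\psi(\theta)$; this is the same manipulation used in \cite{BDKS-FullySimple,ABDKS-XYSwap,BDKS-symplectic} to derive the $x$-$y$ swap and symplectic-duality transformations for $\cD$-type correlators. The substitution $\sum_j J_j w_i^j = \sum_j J_j X_i^j$ up to a Jacobian accounts for the factor $d\log X_i/d\log w_i$ appearing in \eqref{eq:SymplDualExplicit}; the Wick contractions between external legs attached to a common multiedge produce the Bergman-kernel correction $X_iX_j/(X_i-X_j)^2$; and the residual action of $\cD_{\hat\psi}$ on the external insertions, after extracting the unshifted $\cD_{\hat\psi_1}$-part, assembles into the exponential factor $\exp\bigl(v_i\bigl(\tfrac{\cS(\hbar v_i\partial_\theta)}{\cS(\hbar\partial_\theta)}\hat\psi(\theta)-\psi(\theta)\bigr)\bigr)$ together with the operator $(\partial_\theta+v_i\psi'(\theta))^{r_i}$ and the substitution $\restr{\theta}{W^{(0)}_{1,0}(X_i)}$. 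The resulting sum over connected Fock correlators repackages as the graph sum in \eqref{eq:SymplDualExplicit}, where the vertices correspond to the $n$ external insertions and multiedges correspond to the Fock contractions of the remaining state.

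The main obstacle is precisely this last repackaging: carefully matching the expansion coming from the commutation of $\cD_{\hat\psi}$ with the vertex operators against the combinatorial structure of \eqref{eq:SymplDualExplicit}, including the shift from $\hat\psi$ to $\psi$ in the prefactor and the identification of the derivation $\partial_\theta + v_i\psi'(\theta)$ as the natural covariant derivative in the dual variable. Once this identity is in place, Theorem~\ref{thm:mainthm} immediately yields that the differentials $\omega^{\mathrm{MAPS},(g)}_{0,n}$ satisfy topological recursion on the spectral curve \eqref{eq:fscurve}, which is the statement of Theorem~\ref{Thm:fsmapsrec}.
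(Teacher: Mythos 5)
Your overall architecture is exactly the paper's: apply Theorem~\ref{thm:mainthm} to the curve supplied by Theorem~\ref{thm:mapsrec} (whose meromorphy, generality and enumerative-type hypotheses hold as you say, thanks to the explicit rational construction of $X$ and $\Theta$ in \cite[Section~4.1]{BDKS-symplectic}), identify its $\psi$-symplectic dual with the curve \eqref{eq:fscurve}, and then identify the Fock-space functions \eqref{eq:Wfsmaps} with the universal expression \eqref{eq:SymplDualExplicit} evaluated on the $W^{\mathrm{MAPS},(g)}_{m,0}$ of \eqref{eq:Wmaps}. The one place where you diverge is the last step: the paper does not re-derive that identity but simply quotes \cite[Theorem~4.14]{BDKS-FullySimple}, which states precisely that the correlators \eqref{eq:Wfsmaps} are given by the expressions \eqref{eq:SymplDualExplicit} in terms of \eqref{eq:Wmaps}; you instead sketch a direct vertex-operator computation (conjugating $\cD_{\hat\psi}$ through the boundary insertions and repackaging the result as the graph sum), and you yourself flag the repackaging as ``the main obstacle''. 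That computation is essentially a re-proof of the cited theorem and is a substantial piece of work which your proposal outlines but does not carry out, so as written it is a plausible sketch rather than a complete argument at that point. Either invoke \cite[Theorem~4.14]{BDKS-FullySimple} there --- in which case your proof coincides with the paper's short deduction from Theorem~\ref{thm:mapsrec}, that citation, and Theorem~\ref{thm:mainthm} --- or be prepared to supply the full combinatorial matching, which buys self-containedness at the cost of reproducing a lengthy computation already available in the literature.
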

\begin{remark} Again, Remark~\ref{rem:generality} on the generality condition holds here as well.
\end{remark}

\begin{proof}[Proof of Theorem~\ref{Thm:fsmapsrec}]
	This theorem directly follows from Theorem~\ref{thm:mapsrec}, \cite[Theorem 4.14]{BDKS-FullySimple}, and Theorem~\ref{thm:mainthm}.
	
	Indeed, Theorem~\ref{thm:mapsrec} says that the $m$-point functions $W^{\mathrm{MAPS},(g)}_{m,0}$ satisfy topological recursion on the spectral curve~\eqref{eq:mapscurve}. Note that all conditions listed in the assumptions of Theorem~\ref{thm:mainthm} are manifestly satisfied, see~\cite[Section 4.1]{BDKS-symplectic}.
	
	Now, \cite[Theorem 4.14]{BDKS-FullySimple} says that the $n$-point functions $W^{\mathrm{MAPS},(g)}_{0,n}$ can be expressed in terms of the functions $W^{\mathrm{MAPS},(g)}_{m,0}$ exactly via formulas~\eqref{eq:SymplDualExplicit}. Since Theorem~\ref{thm:mainthm} says that the $n$-point functions produced by the topological recursion on the spectral curve~\eqref{eq:fscurve} are expressed in terms of the functions $W^{\mathrm{MAPS},(g)}_{m,0}$ via the same expressions~\eqref{eq:SymplDualExplicit}, this means that they must coincide with $W^{\mathrm{MAPS},(g)}_{0,n}$.%, Q.E.D.
\end{proof}

\subsection{Organization of the paper}

In Section~\ref{sec:expansionAndMN} we express the functions $W^{(g)}_{m,0}$ as correlators on the Fock space. Then we define the $(m,n)$-point functions and the auxiliary $\cW$-functions, which are used in the proofs in Section~\ref{sec:toprec}, and state their properties.

In Section~\ref{sec:toprec} we prove the loop equations and the projection property on the spectral curve $(\Sigma,w,y,B)$, where $w(z)$ is given by~\eqref{eq:wofz}, for the right hand side of~\eqref{eq:SymplDualExplicit} assuming that the functions $W^{(g)}_{m,0}$ satisfy topological recursion on the curve $(\Sigma,X,y,B)$. Then we prove the main Theorem~\ref{thm:mainthm} (which is a simple implication of the statements above).

%The proof of the main theorem \ref{Thm:mainTR} follows the lines of the proof of topological recursion for generalized double Hurwitz numbers from \cite{ABDKS-XYSwap}.

%The key point in the proof of theorem \ref{Thm:main} is that the expressions \eqref{eq:SymplDualExplicit} and the one from the paper \cite{BDKS-FullySimple} for dual $n$-point functions coincide.

\subsection{Acknowledgments}
A.~A. was supported by the Institute for Basic Science (IBS-R003-D1). A.~A. is grateful to IHES for hospitality. Research of B.~B. was supported by the ISF grant 876/20. 
P.D.-B. was supported by the Russian Science Foundation (Grant No. 20-71-10073).
M.K. was supported by the International Laboratory of Cluster Geometry NRU HSE, RF Government grant, ag. № 075-15-2021-608 dated 08.06.2021. 
S.~S. was supported by the Netherlands Organization for Scientific Research. All the authors are grateful to the University of Haifa and especially to Alek Vainshtein for hospitality. We are also grateful to the anonymous referees for their remarks.

%\pdb{TODO: check acknowledgments.}

%%%%%%%%%%%
%%%%%%%%%%%
%%%%%%%%%%%
%%%%%%%%%%%
%%%%%%%%%%%

\section{%Symplectic duality and
	Expansion and the 	\texorpdfstring{$(m,n)$}{(m,n)}-point functions} \label{sec:expansionAndMN}

\subsection{Expansion of the \texorpdfstring{$m$}{m}-point functions}\label{sec:expansion}
Let us work in the setup of Theorem~\ref{thm:mainthm}. Let $(\Sigma,X,y,B)$ be our spectral curve. Recall that we %As mentioned in Remark~\ref{rem:zero}, %Let us 
assume that there exists a point $O\in \Sigma$ such that $X$ has a simple zero at $O$ while $y$ is holomorphic at $O$; let $y_0:=y(O)$.

The function $X$ defines a local coordinate near the point $O$. By the construction of topological recursion on the spectral curve $(\Sigma,X,y,B)$, the form $\omega^{(g)}_{m,0}$ 
% for $(g,n)\neq(0,1)$ 
is holomorphic at $O$ with 
%an appropriate singular correction for the $(0,2)$ case
%unstable cases
appropriate singular corrections for 
%the $(0,2)$ case
unstable cases
and we can consider the expansions of these forms in the local coordinates $X_i=X(z_i)$:
% that are denoted by $X$ and $W$, respectively. Let $Y$ denote the formal germ of the function $xy-1$ at $P$ (note that $Y|_P = 0$), it can be expanded either in $X$ or in $W$. Note the relation $W=X/(1+Y)$
%
%Expand the correlation differentials $\omega^{(g)}_{m,0}$ (a bit corrected for the unstable cases $(g,m)=(0,1)$ and $(g,m)= (0,2)$) near $P\in S$ as
\begin{equation}
	%	&  %\tilde \omega^{(0)}_{1,0}(z_1) \coloneqq
	%	\omega^{(0)}_{1,0}(z_1) - \frac{dx(z_1)}{x(z_1)}  \sim \fom^{(0)}_{1,0} \coloneqq \sum_{k_1=1}^\infty C^{(0)}_{k_1} X_1^{k_1} \Big(-\frac{dX_1}{X_1}\Big) ; \\
	%	& %\tilde \omega^{(0)}_{2,0}(z_1,z_2)\coloneqq
	%	\omega^{(0)}_{2,0}(z_1,z_2) -  \frac{dx(z_1)dx(z_2)}{(x(z_1)-x(z_2))^2}  \sim
	%	\fom^{(0)}_{2,0}\coloneqq \sum_{k_1,k_2=1}^\infty C^{(0)}_{k_1,k_2} X_1^{k_1} X_2^{k_2} \frac{dX_1}{X_1} \frac{dX_2}{X_2} ; \\
	%	& % \tilde\omega^{(g)}_{m,0}(z_{\llbracket m\rrbracket}) \coloneqq
	\omega^{(g)}_{m,0}(z_{\llbracket m\rrbracket})
	-\delta_{(g,m),(0,2)}\tfrac{dX_1dX_2}{(X_1-X_2)^2}-\delta_{(g,m),(0,1)}%\omega^{(0)}_{1,0}%
	y_0\tfrac{dX_1}{X_1}
	\sim% \fom^{(g)}_{m,0}-\delta_{(g,m),(0,1)}\Bigl(-\tfrac{dx_1}{x_1}\Bigr) \coloneqq
	\sum_{k_1,\dots,k_m=1}^\infty C^{(g)}_{k_1,\dots,k_m} \prod_{i=1}^m X_i^{k_i} \Big(\frac{dX_i}{X_i}\Big).
\end{equation}

From now on we abuse the notation a little bit, using the same symbols for the expansion of $\omega$'s and $W$'s as for the their global versions. 
The above expansion can be represented in the following equivalent way
\begin{align}\label{eq:omegacor}
	& \sum_{g=0}^{\infty} \hbar^{2g-2+m}\omega^{(g)}_{m,0} -\delta_{m,2}\tfrac{dX_1dX_2}{(X_1-X_2)^2}-\delta_{m,1}\hbar^{-1}%\omega^{(0)}_{1,0}
	y_0\tfrac{dX_1}{X_1}
	= \VEVc{\prod_{i=1}^m \Big( \sum_{\ell_i=1}^\infty X_i^{\ell_i}\Bigl(\frac{dX_i}{X_i}\Bigr) J_{\ell_i}\Big) \;Z},
	\\
	& Z\coloneqq  \exp\left(\sum_{\substack{g\geq 0,\ m\geq 1\\ k_1,\dots,k_m\geq 1}}\tfrac{\hbar^{2g-2+m}}{m!}
	C^{(g)}_{k_1,\dots,k_m} \prod_{j=1}^m \frac{J_{-k_i}}{k_i}\right).\label{def:Z}
\end{align}

In order to avoid carrying the $(0,1)$ correction term through all the formulas below, we assume from now on without any loss of generality that $y_0=0$.

%
%In the setup of definition \ref{def:curveduality} it is possible to write $n$-point functions in terms of the vacuum expectation values:
%Here we discuss more general vacuum expectation values, let
%That is, w
We have
\begin{align} \label{eq:WZ}
	%	\omega^{(g)}_{m,0} := [\hbar^{2g-2+m}]\VEV{ \prod_{i=1}^m \Big(\sum_{j=1}^\infty J_j X_i^{j} \Big) e^{\sum_{i=1}^d \frac{t_iJ_i}{i\hbar}}\cD_{\hat\psi}		e^{\sum_{i=1}^e \frac{s_iJ_{-i}}{i\hbar}}		\Big)	}  \prod_{i=1}^n \left(-\frac{dX_i}{X_i}\right)
	W^{(g)}_{m,0}(X_{\llbracket m\rrbracket}) = [\hbar^{2g-2+m}]\VEVc{ \prod_{i=1}^m \Big(\sum_{j=-\infty}^\infty J_j X_i^{j} \Big) 	Z		}.
\end{align}
Note that here, as opposed to~\eqref{eq:omegacor}, in the right hand side the sums run from $-\infty$ rather then from $1$. This difference precisely corresponds to the term $\delta_{(g,m),(0,2)}\tfrac{dX_1dX_2}{(X_1-X_2)^2}$ in the left hand side of~\eqref{eq:omegacor}, see~\cite[Proposition 4.1]{BDKS-OrlovScherbin}.

Now let us forget that we wanted to define $W^{(g)}_{0,n}$ as the $n$-point functions of the $\psi$-symplectic dual spectral curve, and instead let us use the following definition for their formal power series expansions:
\begin{align}  \label{eq:WsdZ}
	%	\omega^{(g)}_{0,n} := [\hbar^{2g-2+m}]	\VEV{ \prod_{i=1}^n \Big(\sum_{j=1}^\infty J_j w_i^{j} \Big) \cD_{\hat\psi_1}  e^{\sum_{i=1}^d \frac{t_iJ_i}{i\hbar}}\cD_{\hat\psi}		e^{\sum_{i=1}^e \frac{s_iJ_{-i}}{i\hbar}}	\Big)	}  \prod_{i=1}^n \left(-\frac{dw_i}{w_i}\right).
	W^{(g)}_{0,n}(w_{\llbracket n\rrbracket}) := [\hbar^{2g-2+n}]	\VEVc{ \prod_{i=1}^n \Big(\sum_{j=-\infty}^\infty J_j w_i^{j} \Big) \cD_{\hat\psi} 	Z		}.
\end{align}
\cite[Theorem 4.14]{BDKS-FullySimple} implies that these $n$-point formal power series are given exactly by the expressions~\eqref{eq:SymplDualExplicit} in terms of the formal power series $W^{(g)}_{m,0}$ of~\eqref{eq:WZ}. Since the formal power series $W^{(g)}_{m,0}$'s are expansions of some globally defined functions, the formal power series $W^{(g)}_{0,n}$'s are like this as well. We extend to $W^{(g)}_{0,n}$'s our convention to abuse the notation and use the same symbols for the functions and their formal power series expansions, though for them this means expansions near $O$  in the variable $w$ defined as 
\begin{equation}
	w(z)=X(z) \, e^{-\psi(y(z))}.
\end{equation}
%Here we also assume that there is exists a point $O\in \Sigma$ such 
Note that under our assumptions on $X$, $y$ and $\psi$ (in the setup of Theorem~\ref{thm:mainthm}), we have
that $w(O)=0$ and $w$ is a local coordinate near $O$.
Then, in order to prove Theorem~\ref{thm:mainthm}, we want to prove that the $W^{(g)}_{0,n}$ of \eqref{eq:WsdZ} satisfy the topological recursion on the spectral curve $(\Sigma,w,y,B)$.

%where $Z$ is some operator in $\mathcal{V}_0$.

\subsection{\texorpdfstring{$(m,n)$}{(m,n)}-point functions}
Similarly to~\cite{BDKS-symplectic}, let us define the $(m,n)$-point functions $W^{(g)}_{m,n}$:

\begin{definition}
	\begin{align}		
	&W^{(g)}_{m,n}(X_{\set{m}};w_{\set{m+n}\setminus\set{m}}) \\ \nonumber
	&\phantom{W}:= [\hbar^{2g-2+m+n}]	\VEVc{ \prod_{i=m+1}^{m+n} \Big(\sum_{j=-\infty}^\infty J_j w_i^{j} \Big)\; \cD_{\hat\psi}\; \prod_{i=1}^{m} \Big(\sum_{j=-\infty }^\infty J_j X_i^{j} \Big)Z	}.
\end{align}
\end{definition}

%The $(m,n)$-point functions can actually be represented in terms of just $W^{(g)}_{m,0}$ (or, alternatively, in terms of $W^{(g)}_{0,n}$).

There is a differential-algebraic expression that expresses $W^{(g)}_{m,n+1}$ in terms of $W^{(g)}_{m+1,n}$ and $W^{(g')}_{m',n'}$ with $g'\leq g$, $n'\leq n$, and $2g'+n'+m'\leq 2g+n+m$, see~\cite{BDKS-symplectic}. Let us recall it.

Let $M=\set{m}$ and $N=\set{m+n+1}\setminus \set{m+1}$. 
It is convenient to denote $X=X_{m+1}$ and $w=w_{m+1}$. Let
\begin{align} 
	\cT^{X}_{m+1,n}(u;X_M;X;w_N) &\coloneqq
	\sum_{k=1}^\infty\frac{\hbar^{2(k-1)}}{k!}\left(\prod_{i=1}^k
	\big\lfloor_{X_{\bar i}\to X}u\,\cS(u\hbar X_{\bar i}\partial_{X_{\bar i}})\right)
	\\ \notag & \quad
	\left( \sum_{g=0}^{\infty} \hbar^{2g} W^{(g)}_{m+k,n} 
	(X_{M},X_{\{\bar1,\dots,\bar k\}};w_{N})-
	\delta_{(m,k,n),(0,2,0)}\frac{X_{\bar1}X_{\bar2}}{(X_{\bar1}-X_{\bar2})^2}\right),
	\\
	\cW^{X}_{m+1,n}(u;X) & \coloneqq \frac{e^{\cT^X_{1,0}(u;X)}}{u\cS(u\,\hbar)}
	\sum_{\substack{\sqcup_\alpha K_\alpha=M\cup N,~
			K_\alpha\ne\varnothing,\\~I_\alpha=K_\alpha\cap M,~J_\alpha=K_\alpha\cap N}}
	\prod_{\alpha}\cT^{X}_{|I_\alpha|+1,|J_\alpha|}(u;X_{I_\alpha};X;w_{J_\alpha}).
\end{align}
In order to shorten the notation, we omit dependencies on $X_{\set m},w_{\set{m+n+1}\setminus\set{m+1}}$, and $\hbar$ for $\cW_{m+1,n}$ and on $\hbar$ for $\cT_{m+1,n}$, respectively, in the lists of their arguments. 

Let us also introduce the following notation:
\begin{align}	
	L_0(v,\theta)&:=e^{v\bigl(\frac{\cS(v\hbar\partial_\theta)}{\cS(\hbar\partial_\theta)}\hat\psi(\theta)-\psi(\theta)\bigr)},\\ \label{eq:Lr}
	L_r(v,\theta)&:=e^{-v\psi(\theta)}\partial_\theta^re^{v\psi(\theta)}L_0(v,\theta)=(\partial_\theta+v\psi'(\theta))^rL_0(v,\theta).% \\ 
\end{align}

With the convention on omitting the arguments applied to $W^{(g)}_{m,n+1}$, we have the following proposition:

\begin{proposition} \label{prop:Wmnrec} We have:
	\begin{align}\label{eq:mainrecchange}
		W^{(g)}_{m,n+1}(w)& =[\hbar^{2g}]
		\sum_{j\ge0} (w\partial_w)^j[v^j]\Bigl(\sum_{r\ge0}L_r(v,W^{(0)}_{1,0}(X))[u^r]\frac{d\log X}{d \log w}e^{-u\,W^{(0)}_{1,0}(X)}\cW^{X}_{m+1,n}(u;X)
		\\ \notag & \qquad \qquad \qquad \qquad
		+\delta_{m+n,0}\frac{L_0(v,W^{(0)}_{1,0}(X))}{v}\frac{dW^{(0)}_{1,0}(X)}{d\log w}
		+\delta_{m+n,0} W^{(0)}_{1,0}(X) \Bigr),
	\end{align}	
	where $X$ and $w$ are related by
	\begin{equation}\label{eq:xychange}	
	w = X e^{-\psi(W^{(0)}_{1,0}(X))}.
	\end{equation}
%	(at $x=0$, $y=\infty$).
\end{proposition}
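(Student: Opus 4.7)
The plan is to work entirely at the Fock-space level and to read off the stated formula by moving one vertex operator across the diagonal operator $\cD_{\hat\psi}$ and then translating the result from the $X$-variable to the $w$-variable. Concretely, the left-hand side of~\eqref{eq:mainrecchange} is a fixed coefficient in $\hbar$ of
\begin{equation*}
\VEVc{\Big(\textstyle\sum_{k\in\Z} J_k w^k\Big)\;\Big(\prod_{j=m+2}^{m+n+1}\textstyle\sum_{k\in\Z} J_k w_j^{k}\Big)\,\cD_{\hat\psi}\,\Big(\prod_{i=1}^m\textstyle\sum_{k\in\Z} J_k X_i^{k}\Big)Z}.
\end{equation*}
The singled-out $w$-vertex operator sits outside $\cD_{\hat\psi}$, whereas the combinatorial building block $\cW^{X}_{m+1,n}$ from~\cite{BDKS-symplectic} encodes exactly an analogous expression with an extra $X$-vertex operator sitting \emph{inside} $\cD_{\hat\psi}$ and suitably exponentiated via the $\cS$-operators. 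So the main game is to express the extra outside $w$-operator as a differential operator (in $w$) acting on the inside-$X$-operator expression.

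First, I would conjugate the single vertex operator $\sum_{k\in\Z} J_k w^{k}$ through $\cD_{\hat\psi}$ using the standard identity (already exploited in~\cite{BDKS-OrlovScherbin,BDKS-FullySimple}):
\begin{equation*}
\cD_{\hat\psi}^{-1}\,\Bigl(\textstyle\sum_{k\in\Z} J_k w^{k}\Bigr)\,\cD_{\hat\psi}
\;=\;\text{(explicit operator involving $\tfrac{\cS(\hbar v\partial_\theta)}{\cS(\hbar\partial_\theta)}\hat\psi(\theta)$)},
\end{equation*}
in which the coefficient extraction $[v^j]$ and the Lagrange inversion coming from $w=Xe^{-\psi(\theta)}|_{\theta=W^{(0)}_{1,0}(X)}$ together produce the $[\hbar^{2g}]\sum_{j\ge 0}(w\partial_w)^j[v^j]$ prefactor. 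This is the standard Lagrange-inversion mechanism applied to the closed variable change~\eqref{eq:xychange}. The exponential factor $L_0(v,\theta)$ in~\eqref{eq:Lr} is precisely the residual $v$-dependent weight that appears when one rewrites the conjugated operator in the $X$-coordinate.

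Second, expanding the conjugated operator in powers of $u$ gives the factor $\tfrac{1}{u\cS(\hbar u)}e^{-uW^{(0)}_{1,0}(X)}\cW^{X}_{m+1,n}(u;X)$: the $\tfrac{1}{u\cS(\hbar u)}$ accounts for the bosonic normalization of a single $X$-vertex operator pushed through $\cD_{\hat\psi}$, while the exponential of $\cS$-quantized correlators inside $\cW^{X}$ is exactly the contribution of all connected components that attach to the marked $X$-slot, summed over the set partition $\sqcup_\alpha K_\alpha=M\cup N$ (this is the same exponential-of-connected-correlators expansion already used in~\cite[Section~4]{BDKS-symplectic}). The operators $\partial_\theta+v\psi'(\theta)$ picking up the derivative in $\theta=W^{(0)}_{1,0}(X)$ yield $L_r(v,\theta)$ by~\eqref{eq:Lr}, paired with $[u^r]$ of the $X$-side.

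Finally, the $(0,1)$ and $(0,2)$ unstable corrections must be tracked by hand: when $m+n=0$ the leading Fock pairing between the outside $J_k w^k$ and the $X$-arguments inside $Z$ produces the purely $(0,1)$-type contribution $W^{(0)}_{1,0}(X)$, which is not captured by $\cW^{X}_{m+1,n}$ and must be added back as $\delta_{m+n,0}W^{(0)}_{1,0}(X)$; the subleading pairing, which corresponds to the subtracted Cauchy kernel $\tfrac{X_{\bar 1}X_{\bar 2}}{(X_{\bar 1}-X_{\bar 2})^2}$ in the definition of $\cT^X$, yields the $\delta_{m+n,0}\frac{L_0(v,W^{(0)}_{1,0}(X))}{v}\frac{dW^{(0)}_{1,0}(X)}{d\log w}$ term after Lagrange inversion. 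I expect the main obstacle to be the careful bookkeeping of these unstable corrections and of the Jacobian factor $d\log X/d\log w$ generated by the change of variable, since both involve $\psi'$ in a way that conspires with the $L_r$'s; once they are matched, the statement becomes the Fock-space identity~\cite[Theorem~4.14]{BDKS-FullySimple} specialized to the current $(m,n+1)$ versus $(m+1,n)$ comparison.
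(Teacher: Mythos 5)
Your plan is correct and follows essentially the same route as the paper, which proves Proposition~\ref{prop:Wmnrec} by referring to the analogous Fock-space computation of \cite[Section~4.4]{ABDKS-XYSwap}: conjugating the extra vertex operator $\sum_k J_k w^k$ through $\cD_{\hat\psi}$, expanding the resulting vacuum expectation value into connected correlators (which assemble into $\cW^{X}_{m+1,n}$ with the $\tfrac{1}{u\cS(\hbar u)}$ normalization), and applying the Lagrange-inversion mechanism for the change of variables~\eqref{eq:xychange} to produce the $(w\partial_w)^j[v^j]$ and $L_r$/$[u^r]$ pairing, with the $\delta_{m+n,0}$ unstable corrections tracked separately. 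Your closing attribution to \cite[Theorem~4.14]{BDKS-FullySimple} is only a slight imprecision (that theorem is the iterated, closed graph-sum formula rather than the one-variable step proved here), but the underlying operator manipulation is the same.
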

\begin{corollary}
	For $(g,m+n)\neq(0,1)$, $W^{(g)}_{m,n}$ can be extended to a global meromorphic function on $\Sigma^{m+n}$, and Equation~\eqref{eq:mainrecchange} then holds for these global meromorphic functions.
\end{corollary}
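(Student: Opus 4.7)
The plan is to proceed by induction on $n$. For $n=0$ the statement is part of the setup of Theorem~\ref{thm:mainthm}: topological recursion on $(\Sigma,X,y,B)$ produces globally meromorphic correlation differentials $\omega^{(g)}_{m,0}$, so the corresponding $W^{(g)}_{m,0}$ extend globally for all $(g,m)\neq(0,1)$, while $W^{(0)}_{1,0}=y$ is already meromorphic by the \emph{meromorphy condition}.

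For the inductive step, fix $n\geq 0$ and assume that $W^{(g')}_{m',n'}$ has been extended to a global meromorphic function on $\Sigma^{m'+n'}$ for every $n'\leq n$ with $(g',m'+n')\neq(0,1)$. Given $(g,m,n+1)$ with $(g,m+n+1)\neq(0,1)$, I want to promote the formal identity~\eqref{eq:mainrecchange} of Proposition~\ref{prop:Wmnrec} to an identity of global meromorphic functions in the variable $z_{m+1}\in\Sigma$ (the remaining arguments are already in the global setting by the inductive hypothesis). The right hand side is assembled from: the $W^{(\tilde g)}_{m+k,n}$ appearing in $\cT^{X}_{m+1,n}$ and hence in $\cW^{X}_{m+1,n}$, which are globally meromorphic by induction; the meromorphic function $W^{(0)}_{1,0}=y$; the rational quantities $\psi$, $\psi'$, $\hat\psi$ evaluated on $y$; and the data $X$, $w$, $d\log X/d\log w$ attached to the change of variable~\eqref{eq:xychange}, which yields a meromorphic map on $\Sigma$ under the meromorphy condition. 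Since the operators $w\partial_w$ and $\partial_\theta$ preserve meromorphy, once the outer sums are shown to be finite, the right hand side is globally meromorphic in $z_{m+1}$.

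The main technical point, and the principal obstacle, is to verify finiteness of the sums $\sum_{j\geq 0}(w\partial_w)^j[v^j](\ldots)$ and $\sum_{r\geq 0}[u^r](\ldots)$ after extracting $[\hbar^{2g}]$. This follows from the $\hbar^2$-graded structure of the ingredients. The exponent of $L_0(v,\theta)$ is $O(\hbar^2)$ because $\tfrac{\cS(v\hbar\partial_\theta)}{\cS(\hbar\partial_\theta)}\hat\psi(\theta)-\psi(\theta)$ vanishes at $\hbar=0$, so $[\hbar^{2g}]L_0(v,\theta)$ is a polynomial in $v$ of degree at most $g$; together with~\eqref{eq:Lr} this bounds the nonzero range of $j$ in the outer sum. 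On the $u$-side, the factor $e^{-uW^{(0)}_{1,0}(X)}$ exactly cancels the $\hbar=0$ piece of $e^{\cT^{X}_{1,0}(u;X)}$ inside $\cW^{X}_{m+1,n}$, so $[\hbar^{2g}]$ of $e^{-uW^{(0)}_{1,0}(X)}\cW^{X}_{m+1,n}(u;X)$ is a Laurent polynomial in $u$ of bounded degree (starting at $u^{-1}$ from the prefactor $\tfrac{1}{u\cS(u\hbar)}$), which truncates the $r$-sum.

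Once the right hand side is identified as a globally meromorphic function on $\Sigma^{m+n+1}$, the identification with $W^{(g)}_{m,n+1}$ is immediate: Proposition~\ref{prop:Wmnrec} gives equality of formal power series in $w$ near $w=0$, and $w$ is a local coordinate at $O$, so the Taylor expansion at $z_{m+1}=O$ of the right hand side coincides with the formal series defining $W^{(g)}_{m,n+1}(w)$. The latter therefore extends uniquely to this globally meromorphic function, and~\eqref{eq:mainrecchange} upgrades from a formal identity to a global one. This closes the induction.
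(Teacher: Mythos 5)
Your argument is correct and coincides with the paper's (implicit) justification of this corollary: for fixed $(g,m,n)$ the right-hand side of~\eqref{eq:mainrecchange} is a finite differential-algebraic expression in globally meromorphic data (the finiteness coming from the $\hbar^2$-grading exactly as you describe), so induction on $n$ together with agreement of the local expansions at $O$ in the coordinate $w$ yields the global meromorphic extension and the global identity. Two harmless slips worth fixing: $w$ itself is generally \emph{not} meromorphic (only $d\log w$ is), which is fine because the formula uses $w$ only through $w\partial_w$, $\tfrac{d\log X}{d\log w}$ and $\tfrac{dW^{(0)}_{1,0}(X)}{d\log w}$; and the $v$-degree of $[\hbar^{2g}]L_0$ is bounded but not by $g$ (the operator $\cS(v\hbar\partial_\theta)$ contributes powers of $v$ along with each $\hbar^2$), though any finite bound suffices to truncate the $j$-sum.
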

Similarly to how we do it with $\omega^{(g)}_{m,0}$'s and $W^{(g)}_{m,0}$'s, we abuse notation for $W^{(g)}_{m,n}$'s as well and use one and the same notation for their expansions and for their global versions.

Moreover, if we introduce the notation
\begin{align} \label{eq:cwdef1}
	\cT^{w}_{m,n+1}(u;X_M;w;w_N) &\coloneqq
	\sum_{k=1}^\infty\frac{\hbar^{2(k-1)}}{k!}\left(\prod_{i=1}^k
	\big\lfloor_{w_{\bar i}\to w}u\,\cS(u\hbar w_{\bar i}\partial_{w_{\bar i}})\right)
	\\ \notag & \quad
	\left(
	\sum_{g=0}^{\infty} \hbar^{2g} W^{(g)}_{m,n+k}(X_{M};w_{\{\bar1,\dots,\bar k\}},w_{N})-
	\delta_{(m,k,n),(0,2,0)}\frac{w_{\bar1}w_{\bar2}}{(w_{\bar1}-w_{\bar2})^2}\right),
	\\ \label{eq:cwdef2}
	\cW^{w}_{m,n+1}(u;w) & \coloneqq \frac{e^{\cT^w_{0,1}(u;w)}}{u\cS(u\,\hbar)}
	\sum_{\substack{\sqcup_\alpha K_\alpha=M\cup N,~
			K_\alpha\ne\varnothing,\\~I_\alpha=K_\alpha\cap M,~J_\alpha=K_\alpha\cap N}}
	\prod_{\alpha}\cT^{w}_{|I_\alpha|,|J_\alpha|+1}(u;X_{I_\alpha};w;w_{J_\alpha}),
\end{align}
and also
\begin{equation}
	\tilde L_r(v,\theta,u):=(\partial_\theta+v\psi'(\theta))^re^{u\,\theta}L_0(v,\theta),
\end{equation}
then we have
\begin{proposition}	\label{prop:cWmnrec}
	\begin{align}\label{eq:WYtoomega}
		W^{(g)}_{m+1,n}(X)& =[\hbar^{2g}]
		\sum_{j\ge0} (X\partial_X)^j[v^j]\Bigl(\sum_{r\ge0}L_r(-v,W^{(0)}_{0,1}(w))[u^r]\frac{d\log w}{d \log X}e^{-u\,W^{(0)}_{0,1}(w)}\cW^{w}_{m,n+1}(u;w)
		\\ \notag & \qquad \qquad \qquad \qquad
		-\delta_{m+n,0}\frac{L_0(-v,W^{(0)}_{0,1}(w))}{v}\frac{dW^{(0)}_{0,1}(w)}{d\log X}
		+\delta_{m+n,0} W^{(0)}_{0,1}(w) \Bigr),
	\end{align}
	\begin{align}\label{eq:WwtoW}
		\cW^{w}_{m,n+1}(\tilde u;w)& =
		\sum_{j\ge0} (w\partial_w)^j [v^j]\Bigl(\sum_{r\ge0}\tilde L_r(v,W^{(0)}_{1,0}(X),\tilde u)[u^r]\frac{d \log X}{d\log w}e^{-u\,W^{(0)}_{1,0}(X)}\cW^{X}_{m+1,n}(u;X)
		\\ \notag & \qquad \qquad \qquad
		+\delta_{m+n,0}\frac{\tilde L_0(v,W^{(0)}_{1,0}(X),\tilde u)}{v}\frac{dW^{(0)}_{1,0}(X)}{d \log w}+\delta_{m+n,0}\frac{e^{\tilde u W^{(0)}_{1,0}(X)}}{\tilde u}\Bigr),
		\\ \label{eq:WXtoW}
		\cW^{X}_{m+1,n}(u;X)& =
		\sum_{j\ge0} (X\partial_X)^j[v^j]\Bigl(\sum_{r\ge0}\tilde L_r(-v,W^{(0)}_{0,1}(w),u)[\tilde u^r]\frac{d\log w}{d\log X}e^{-\tilde u\,W^{(0)}_{0,1}(w)}\cW^{w}_{m,n+1}(\tilde u;w)
		\\ \notag & \qquad \qquad \qquad
		-\delta_{m+n,0}\frac{\tilde L_0(-v,W^{(0)}_{0,1}(w),u)}{v}\frac{dW^{(0)}_{0,1}(w)}{d\log X}+\delta_{m+n,0}\frac{e^{u W^{(0)}_{0,1}(w)}}{u}\Bigr).
	\end{align}
\end{proposition}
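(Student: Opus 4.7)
The plan is to exploit the symmetry inherent in $\psi$-symplectic duality. The change of variables $w=Xe^{-\psi(y)}$ inverts to $X=we^{\psi(y)}$, so the substitution
\[
X \leftrightarrow w, \qquad \psi \to -\psi, \qquad \hat\psi \to -\hat\psi
\]
is an involution exchanging the two dual spectral curves. I will first derive \eqref{eq:WYtoomega} as the mirror image of Proposition~\ref{prop:Wmnrec} under this involution, and then obtain the dressed identities \eqref{eq:WwtoW} and \eqref{eq:WXtoW} by a resummation argument.

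For \eqref{eq:WYtoomega}: the Fock space definition of $W^{(g)}_{m,n}$ in Section~\ref{sec:expansion} is structurally symmetric under the above involution, since it amounts to conjugating the operator $\cD_{\hat\psi}$ to $\cD_{-\hat\psi}=\cD_{\hat\psi}^{-1}$ and transposing the ordering of $X$- and $w$-insertions across it. Consequently, applying the proof of Proposition~\ref{prop:Wmnrec} to the spectral curve $(\Sigma,w,y,B)$ viewed as the $(-\psi)$-symplectic dual of $(\Sigma,X,y,B)$ yields a formula for $W^{(g)}_{m+1,n}(X)$ in terms of $\cW^w_{m,n+1}(u;w)$. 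Since $\cS$ is even, the substitution $\hat\psi\to -\hat\psi$, $\psi\to-\psi$ inside $L_r(v,\theta)$ is equivalent to $v\to-v$, so that $L_r(v,\theta)$ is replaced by $L_r(-v,\theta)$, and the unstable boundary correction flips sign correspondingly. This reproduces \eqref{eq:WYtoomega} verbatim.

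For \eqref{eq:WwtoW} and \eqref{eq:WXtoW}: both $\cW^X_{m+1,n}(u;X)$ and $\cW^w_{m,n+1}(\tilde u;w)$ are formal generating functions, in $u$ and $\tilde u$ respectively, whose coefficients assemble derivatives of $W^{(g)}_{m+k,n}$ (resp.\ $W^{(g)}_{m,n+k}$) with $k$ extra arguments collapsed to a single point. Applying \eqref{eq:WYtoomega} to each constituent $W^{(g)}_{m+k,n}$ appearing in the partition-sum expansion of $\cW^X_{m+1,n}$ converts its extra $X$-arguments into extra $w$-arguments at the image point $w$; collecting terms and resumming as a generating function in a new parameter $\tilde u$ weighting those $w$-insertions reproduces the partition-sum structure defining $\cW^w_{m,n+1}(\tilde u;w)$. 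The replacement $L_r\to\tilde L_r$, i.e.\ the appearance of the factor $e^{u\theta}$ in \eqref{eq:Lr}, encodes exactly this resummation: a coherent-state insertion at $w$ with parameter $\tilde u$, restricted to the diagonal $\theta = W^{(0)}_{1,0}(X) = W^{(0)}_{0,1}(w)$, contributes precisely this exponential weight. Equation \eqref{eq:WXtoW} then follows from \eqref{eq:WwtoW} by applying the involution once more.

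The main obstacle will be the bookkeeping of the unstable corrections. The $\delta_{m+n,0}$ boundary terms in \eqref{eq:WwtoW} and \eqref{eq:WXtoW} come from three distinct sources: the unstable $(g,n)=(0,1)$ piece of $W^{(0)}_{1,0}$, which governs the change of variables between $X$ and $w$; the $(g,n)=(0,2)$ shift subtracting $\tfrac{X_{\bar 1}X_{\bar 2}}{(X_{\bar 1}-X_{\bar 2})^2}$ inside $\cT^X$; and the constant-in-$u$ part of the coherent-state insertion, controlled by the $u\cS(u\hbar)$ prefactor in the definition of $\cW$. Matching the combinatorics of these three contributions to produce the clean expressions $e^{\tilde u W^{(0)}_{1,0}(X)}/\tilde u$ and $e^{u W^{(0)}_{0,1}(w)}/u$ is the technical heart of the argument, and parallels the analogous computation carried out in~\cite{BDKS-symplectic} in the proof of Proposition~\ref{prop:Wmnrec}.
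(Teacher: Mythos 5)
The paper itself does not spell out a proof: it states that Propositions~\ref{prop:Wmnrec} and~\ref{prop:cWmnrec} are proved ``completely analogously to [ABDKS-XYSwap, Section~4.4]'', i.e.\ by the Fock-space conjugation computation of that paper. Your first step is consistent with that mechanism: deriving \eqref{eq:WYtoomega} by running the proof of Proposition~\ref{prop:Wmnrec} with the roles of $X$ and $w$ exchanged and $\hat\psi\to-\hat\psi$ is legitimate, because the underlying identity is the conjugation of a single insertion $\sum_j J_jX^j$ (resp.\ $\sum_j J_j w^j$) through $\cD_{\hat\psi}$, which works in both directions regardless of the asymmetric placement of $Z$; and your observation that evenness of $\cS$ turns the substitution $(\psi,\hat\psi)\to(-\psi,-\hat\psi)$ inside $L_r$ into $v\to-v$ is correct. (Your phrase that the Fock-space definition ``is structurally symmetric'' under the involution is loose — there is no transposition symmetry of $\VEVc{\cdot}$ — but the argument does not actually need it.)

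The genuine gap is in your treatment of \eqref{eq:WwtoW}--\eqref{eq:WXtoW}. Applying \eqref{eq:WYtoomega} ``to each constituent $W^{(g)}_{m+k,n}$'' of the partition-sum defining $\cW^{X}_{m+1,n}(u;X)$ and then ``resumming'' is not a well-defined procedure as stated: the extra arguments $X_{\bar 1},\dots,X_{\bar k}$ in $\cT^{X}$ are first dressed by the operators $u\,\cS(u\hbar X_{\bar i}\partial_{X_{\bar i}})$ and then restricted to the coincident point $X$, and the single-variable conversion formula \eqref{eq:WYtoomega} does not commute in any obvious way with these dressings and restrictions; converting one $\bar\imath$-argument produces $\cW^{w}$-type objects whose remaining $X_{\bar\jmath}$-arguments can no longer be processed by the same formula. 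The correct route (and the one behind the citation the paper relies on) is to treat the whole extended insertion at once: $\cW^{X}_{m+1,n}(u;X)$ is the correlator of a single exponential operator whose exponent is linear in the $J_j$'s, so conjugating it through $\cD_{\hat\psi}$ acts on the exponent, and it is precisely this computation that produces the factor $e^{u\theta}$ converting $L_r$ into $\tilde L_r$ in \eqref{eq:Lr}, together with the unstable corrections $e^{\tilde u W^{(0)}_{1,0}(X)}/\tilde u$. In your write-up this key step is asserted (``a coherent-state insertion \dots contributes precisely this exponential weight'') rather than derived, and the matching of the $\delta_{m+n,0}$ terms is explicitly deferred; so the technical heart of \eqref{eq:WwtoW}--\eqref{eq:WXtoW} is missing, and the specific term-by-term strategy you propose would need to be replaced by (or reduced to) the operator-level conjugation argument to close it.
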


The proofs of Propositions~\ref{prop:Wmnrec} and~\ref{prop:cWmnrec} go completely analogously to \cite[Section~4.4]{ABDKS-XYSwap}. Same as with Equation~\eqref{eq:mainrecchange}, Equations~\eqref{eq:WYtoomega}--\eqref{eq:WXtoW} hold for both the expansions of the functions $W^{(g)}_{m,n}$ and $\cW^{(g)}_{m,n}$ as well as for their global versions.

Let us also introduce the following notation:
\begin{notation}
	\begin{align}
		\cW^{X,(g)}_{m,n}&:=[\hbar^{2g}]		\cW^{X}_{m,n}, \\
		\cW^{w,(g)}_{m,n}&:=[\hbar^{2g}]		\cW^{w}_{m,n}.
	\end{align}
\end{notation}

We note an obvious property of $\cW^{X,(g)}_{m,n}$ and $\cW^{w,(g)}_{m,n}$:
\begin{proposition}
	\begin{align}
		[u^0]\cW^{X,(g)}_{m,n}(u;X_{\set{m}};w_{\set{m+n}\setminus\set{n}})
		& =[u^0]\cW^{w,(g)}_{m,n}(u;X_{\set{m}};w_{\set{m+n}\setminus\set{n}})
		\\ \notag &
		=W^{(g)}_{m,n}(X_{\set{m}};w_{\set{m+n}\setminus\set{n}}).
	\end{align}
\end{proposition}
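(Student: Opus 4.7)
The plan is to extract $[u^0]\cW^{X}_{m+1,n}$ and $[u^0]\cW^{w}_{m,n+1}$ directly from their defining formulas, exploiting the fact that every building block $u\cS(u\hbar X_{\bar i}\partial_{X_{\bar i}})$ appearing in $\cT^X$ (and the analogous block in $\cT^w$) starts with a single factor of $u$, since $\cS(0)=1$. This yields, for every $m'\ge 0$ and $n'\ge 0$,
\begin{equation*}
\cT^X_{m'+1,n'}(u;X_{M'};X;w_{N'}) \;=\; u \sum_{g\ge 0} \hbar^{2g}\, W^{(g)}_{m'+1,n'}(X_{M'},X,w_{N'}) + O(u^2),
\end{equation*}
the $u^1$-coefficient coming solely from the $k=1$ summand in the definition of $\cT^X$; the $(0,2,0)$ subtraction is inactive at $k=1$, and every $k\ge 2$ summand contributes $O(u^k)$.

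Next, I would rewrite the defining relation for $\cW^X_{m+1,n}$ in the form
\begin{equation*}
u\cS(u\hbar)\,\cW^{X}_{m+1,n}(u;X) \;=\; e^{\cT^X_{1,0}(u;X)}\, \sum_{\pi} \prod_\alpha \cT^X_{|I_\alpha|+1,|J_\alpha|}(u;X_{I_\alpha};X;w_{J_\alpha}),
\end{equation*}
and observe that the partition sum on the right starts at order $u^p$, where $p$ is the number of parts of $\pi$. Since $u\cS(u\hbar)=u+O(u^3\hbar^2)$, the $u^0$-coefficient of $\cW^X_{m+1,n}$ equals the $u^1$-coefficient of the right-hand side. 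Because $e^{\cT^X_{1,0}}=1+O(u)$, only its constant term $1$ matters at order $u^1$, and the $u^1$-contribution from the partition sum is picked up solely by the single-part partition $\pi=\{M\cup N\}$. This produces $\sum_g \hbar^{2g}W^{(g)}_{m+1,n}(X_M,X,w_N)$ by the preceding paragraph. (In the degenerate case $M\cup N=\varnothing$, the partition sum collapses to the empty product $1$ and the $u^1$-term comes instead from $[u^1]\cT^X_{1,0}$ inside $e^{\cT^X_{1,0}}$, giving the same answer $\sum_g \hbar^{2g}W^{(g)}_{1,0}(X)$.) Extracting $[\hbar^{2g}]$ yields the first claimed equality.

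The second equality $[u^0]\cW^{w,(g)}_{m,n+1}=W^{(g)}_{m,n+1}$ follows by an argument identical up to interchanging $X\leftrightarrow w$ and $M\leftrightarrow N$, with $\cT^w_{0,1}$ in place of $\cT^X_{1,0}$. No serious obstacle is anticipated: the proposition is essentially a bookkeeping identity built into the definitions of $\cT$ and $\cW$; the only mild subtlety is that the $(0,2,0)$ subtraction is inactive at leading $u^1$-order, which is precisely why the single-part partition produces the unmodified $W^{(g)}_{m+1,n}$ rather than something corrected.
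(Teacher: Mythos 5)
Your proof is correct: the paper states this as an ``obvious property'' with no written proof, and your $u$-power bookkeeping (each factor $u\,\cS(u\hbar X_{\bar i}\partial_{X_{\bar i}})$ contributing exactly one power of $u$, so that $[u^0]\cW$ picks out only the $k=1$ term of the single-part partition, with the $(0,2,0)$ subtraction and the prefactors $e^{\cT^X_{1,0}}$, $1/\cS(u\hbar)$ invisible at that order) is precisely the intended definitional argument, including the correct treatment of the degenerate case $M\cup N=\varnothing$.
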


\section{Topological recursion}\label{sec:toprec}
In this section we prove our main Theorem \ref{thm:mainthm}. 

%====
\iffalse
From here on we assume $t_i=0,\;i>d,$ and $s_i=0,\;i>e,$ for some fixed $d,e\in\mathbb{Z}_{\geq 1}$, and  %only a finite number of nonzero $t_i$ and $s_i$, and with
\begin{align}
	\hat\psi&=\log R(\theta) + \cS(\hbar\partial_{\theta})P(\theta),
%	\hat\psi_1&=\log R_1(\theta) + \cS(\hbar\partial_{\theta})P_1(\theta),
\end{align}
where $R$ is an arbitrary rational function, $P$ is an arbitrary polynomial, and $\mathcal{S}(z)$ given by~\eqref{eq:Sdef}.
We also further assume the \emph{generality condition}, i.e. that all zeroes and poles of $R(\theta),R_1(\theta)$ and all zeroes of $P(\theta),P_1(\theta)$ are simple. We are working in the setup of Theorem~\ref{thm:mapsrec}, so we have functions $X(z)$ and $\Theta(z)$ on $\Sigma=\mathbb{C}P^1$ which define the spectral curve for the $m$-point differentials 
\begin{equation}
\omega^{(g)}_{m,0}(z_1,\dots,z_m) = W^{(g)}_{m,0}(X(z_1),\dots,X(z_m))\;\prod_{i=1}^m\dfrac{dX(z_i)}{X(z_i)},
\end{equation}
where $W^{(g)}_{0,n}$ are the ones of~\eqref{eq:Wmaps}.

Also let
\begin{equation}	
	w(z) = X(z) e^{-\psi(\Theta(z))}.
\end{equation}
\fi
%====

\subsection{Loop equations}
Let us at first reformulate linear and quadratic loop equations described in Definition~\ref{def:trloopeqdef} in a different form (see the discussion on the equivalence of the two definitions of loop equations in \cite[Section 5]{ABDKS-XYSwap}).

\begin{notation}\label{def:xispace} We denote by $\Xi^X$ the space of functions defined in a neighborhood of the zero locus of $dX$ on $\Sigma$ and having odd polar part with respect to the involution~$\sigma$:
	\begin{equation} \label{eq:Xi}
		f\in\Xi^X\quad\Leftrightarrow\quad \forall i\quad f(z)+f(\sigma_i(z))\text{ is holomorphic at $z\to p_i$.}
	\end{equation}
\end{notation}
Any function with poles of order at most one belongs to $\Xi^X$. Besides, $\Xi^X$ is preserved by the operator $X\partial_X$. It follows that any function of the form $(X\partial_X)^kf$, $k\ge0$, belongs to $\Xi^X$ if $f$ has at most simple poles at the points $p_i$, $i\in\{1,\dots,N\}$. %Moreover, its is easy to see by local computations that $\Xi^X$ is actually spanned by the functions of this form. 

\begin{definition}\label{def:mnlqloop}
We say that the $(m,n)$-point functions $\{W^{(g)}_{m,n}\}$ satisfy the linear and quadratic loop equations, if
\begin{align}\label{eq:Xloop}
	[u^0]\cW^{X,(g)}_{m+1,n}(u;X),\;[u^1]\cW^{X,(g)}_{m+1,n}(u;X) \in \Xi^X
\end{align}
 and 
\begin{align}\label{eq:wloop}
	[\tilde u^0]\cW^{w,(g)}_{m,n+1}(\tilde u;w),\; [\tilde u^1]\cW^{w,(g)}_{m,n+1}(\tilde u;w) \in \Xi^w.
\end{align} 
\end{definition}

Note that this set of equations contains (for $n=0$ and $m=0$ respectively) the usual linear and quadratic loop equations for $\{W^{(g)}_{m,0}\}$ and $\{W^{(g)}_{0,n}\}$ on the curves 
$\left(\Sigma,X,y,B\right)$ and $\left(\Sigma,w,y,B\right)$ respectively, in the sense of~\eqref{eq:lloop}--\eqref{eq:qloop}.
%Then the linear and quadratic loop equations can be represented in the following equivalent form:
%\begin{gather}\label{eq:LLE-W}
%	W^{(g)}_{m+1,0}(z_{\llbracket m \rrbracket},z)\in\Xi^X\\
%	\label{eq:QLE-W}
%	W^{(g-1)}_{m+2,0}(z_{\llbracket m \rrbracket},z,z) + \sum_{\substack{g_1+g_2=g\\ I_1\sqcup I_2 = \llbracket m \rrbracket
%	}} W^{(g_1)}_{|I_1|+1,0} (z_{I_1},z)W^{(g_2)}_{|I_2|+1,0} (z_{I_2},z)\in\Xi^X.
%\end{gather}
%Here and below we abuse notation and write $W^{(g)}_{m,n}(z_{\set{m}};z_{\set{m+n}\setminus\set{m}})$ in place of \\$W^{(g)}_{m,n}(X(z_1),\dots,X(z_m);w(z_{m+1}),\dots,w(z_{m+n}))$.

\begin{proposition}\label{prop:looppoles} The $(m,n)$-point functions $W^{(g)}_{m,n}$ satisfy linear and quadratic loop equations in the sense of Definition~\ref{def:mnlqloop}. Moreover, $W^{(g)}_{m,n}$ is holomorphic in $z_i$ at any zero of $dw$ for $i\in\set{m}$, and it is holomorphic in $z_j$ at any zero of $dX$ if $j\in \set{m+n}\setminus\set{m}$.
\end{proposition}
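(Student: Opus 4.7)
The plan is to prove the proposition by induction on $n$, with the base case $n=0$ supplied directly by the hypotheses on the curve $(\Sigma,X,y,B)$. For $n=0$, the conditions $[u^0]\cW^{X,(g)}_{m+1,0},\,[u^1]\cW^{X,(g)}_{m+1,0}\in\Xi^X$ are literally the linear and quadratic loop equations for topological recursion on $(\Sigma,X,y,B)$, which hold by assumption; the holomorphicity of $W^{(g)}_{m,0}$ at zeros of $dw$ then follows from the projection property on $(\Sigma,X,y,B)$, which forces the only poles to lie at zeros of $dX$, combined with the generality assumption that the zero loci of $dX$ and $dw$ are disjoint. The companion $n=0$ statement, that $[\tilde u^0]\cW^{w,(g)}_{m,1}$ and $[\tilde u^1]\cW^{w,(g)}_{m,1}$ lie in $\Xi^w$, I would obtain from the $X$-side data by running the transformation identity \eqref{eq:WwtoW}.

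For the inductive step the main tool is Proposition~\ref{prop:cWmnrec}: the pair of identities \eqref{eq:WwtoW} and \eqref{eq:WXtoW} transports properties between $\cW^{X,(g)}_{m+1,n}$ and $\cW^{w,(g)}_{m,n+1}$. The plan is to analyze these identities locally at each critical point and observe that they split into a ``transport'' piece, namely the operators $(X\partial_X)^j$ or $(w\partial_w)^j$ which (by the remark after Notation~\ref{def:xispace}) preserve $\Xi^X$ and $\Xi^w$ respectively, and a ``local'' piece built from $L_r$, $\tilde L_r$, the exponential factors in $W^{(0)}_{1,0}$ and $W^{(0)}_{0,1}$, and the ratios $\tfrac{d\log X}{d\log w}$ or $\tfrac{d\log w}{d\log X}$. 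At a zero of $dX$ (resp.\ $dw$), the local piece is either regular or produces exactly the vanishing required to respect the parity condition under the relevant involution, so that the $\Xi^X$ (resp.\ $\Xi^w$) property is preserved by the transport.

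The holomorphicity at the ``wrong'' critical points will follow from a zero--pole cancellation engineered by the change of variables. At a zero $q$ of $dw$ with $dX(q)\neq 0$ (which is forced by the generality condition), the prefactor $\tfrac{d\log w}{d\log X}$ appearing in \eqref{eq:WXtoW} has a simple zero at $q$; this kills any simple pole that $\cW^{w,(g)}_{m,n+1}$ might have at $q$ in the argument being converted to an $X$-variable, so that the transported $\cW^{X,(g)}_{m+1,n}$—and hence $W^{(g)}_{m,n}$ in its $X$-arguments—is holomorphic at $q$. The symmetric argument, starting from \eqref{eq:WwtoW} with the factor $\tfrac{d\log X}{d\log w}$, yields the holomorphicity of $W^{(g)}_{m,n}$ in its $w$-arguments at zeros of $dX$.

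The main technical obstacle I anticipate is the careful bookkeeping of the extractions $[u^r]$, $[\tilde u^r]$ and $[v^j]$ in \eqref{eq:WwtoW}--\eqref{eq:WXtoW}, which mix contributions across all orders in $u$, $\tilde u$, and $v$. One must verify that polar behavior on one side does not leak into the relevant low-order coefficients $[u^0]$, $[u^1]$ (resp.\ $[\tilde u^0]$, $[\tilde u^1]$) on the other side except in the controlled manner that preserves the $\Xi$ property, and that the higher-order operators $(X\partial_X)^j$, $(w\partial_w)^j$ acting on functions with at most simple poles do not destroy this. I would organize the resulting local analysis along the lines of \cite[Section~5]{ABDKS-XYSwap}, whose parallel verification in the $X$--$y$ swap setting should adapt to the present setting with only the modifications dictated by the presence of the nontrivial $\hat\psi$ and the different change of variables.
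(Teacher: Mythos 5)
Your base case and the easy halves of the statement are in line with the paper, but the mechanism you propose for the hard half has a genuine gap. The crucial claim is the regularity of $W^{(g)}_{m,n}$ in its $w$-type arguments at the zeros of $dX$ (and, downstream of it, the $\Xi^X$ loop equations~\eqref{eq:Xloop} for general $(m,n)$). You propose to get it from~\eqref{eq:WwtoW} by a zero--pole cancellation: the Jacobian factor $\tfrac{d\log X}{d\log w}$ has a simple zero at a zero $p$ of $dX$, which you expect to kill the poles of $\cW^{X}_{m+1,n}$ there. This cannot work as stated: at the zeros of $dX$ the functions $W^{(g)}_{m,0}$, and hence $\cW^{X,(g)}_{m+1,n}$, have poles of high order (these are exactly the points where topological recursion places the poles), and a simple zero of the prefactor cannot cancel them, nor do the loop equations bound the pole order by one. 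Note also that you have the easy and hard directions interchanged: holomorphicity in the $X$-type arguments at zeros of $dw$ is the trivial part (such poles are absent in the initial data $W^{(g)}_{m,0}$ and are never created by the recursion~\eqref{eq:mainrecchange}; no cancellation with $\tfrac{d\log w}{d\log X}$ is needed), whereas your ``symmetric argument'' is invoked precisely where the naive cancellation fails.

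What the paper actually does at this point is an indirect argument through the inverse identity~\eqref{eq:WXtoW} with $n=0$: the coefficients $[u^0]$ and $[u^1]$ of its left-hand side lie in $\Xi^X$ because they are the \emph{linear and quadratic} loop equations of the original curve $(\Sigma,X,y,B)$; isolating the $j=0$, $r=0$ term one writes them as $\tfrac{d\log w}{d\log X}W^{(g)}_{m,1}+\mathrm{Rest}_0$ and $\tfrac{d\log w}{d\log X}W^{(0)}_{1,0}W^{(g)}_{m,1}+\mathrm{Rest}_1$, where the Rest terms involve only $W^{(g')}_{m',n'}$ with $2g'-2+m'+n'\leq 2g-2+m$ and are in $\Xi^X$ by induction (using that $X\partial_X$ preserves $\Xi^X$). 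One then concludes $\tfrac{d\log w}{d\log X}W^{(g)}_{m,1}\in\Xi^X$ and $W^{(0)}_{1,0}\tfrac{d\log w}{d\log X}W^{(g)}_{m,1}\in\Xi^X$, and a parity argument --- the first membership forces any pole at $p$ to have odd order, while multiplying by $W^{(0)}_{1,0}-W^{(0)}_{1,0}(p)$ lowers the order by exactly one without leaving $\Xi^X$ --- shows the pole is at most simple, i.e.\ $W^{(g)}_{m,1}$ is regular at $p$; symmetry in the last $n$ arguments and the recursion then propagate this to all $W^{(g)}_{m,n}$, and only after that do the equations~\eqref{eq:Xloop} follow from the form of~\eqref{eq:WXtoW}. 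This interplay of both loop equations of the $X$-curve plus the parity trick is the missing idea in your proposal; the bookkeeping of $[u^r]$, $[\tilde u^r]$, $[v^j]$ that you flag as the main obstacle is secondary to it.
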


\begin{proof} This proposition mostly follows~\cite[Proposition~5.7]{ABDKS-XYSwap}, but there are some differences as we use another form of recursive formulas, so we repeat the reasoning here with respective adjustments.
	
 The fact that $W^{(g)}_{m,n}$ is holomorphic in $z_i$ at any zero of $dw$ for $i\in\set{m}$ is obvious: these poles are absent for the initial functions~$W^{(g)}_{m,0}$ and they do not appear in the course of recursion~\eqref{eq:mainrecchange}. 
	Then, the very form of Equation~\eqref{eq:WwtoW} implies that the loop equations %$[\tilde w^{r-1}]\cW^{w,(g)}_{m,n+1}(\tilde w;z)\in\Xi^y$ 
	\eqref{eq:wloop}
	hold for %all $r$ and
	all $(g,m,n)$.

	Let us look more attentively at the identity~\eqref{eq:WXtoW} for the case $n=0$. The summand with $r=0$ corresponds to the coefficient $[\tilde u^0]\cW^{w,(g)}_{m,1}(\tilde u;w)=W^{(g)}_{m,1}$. Also note that $[u^0]\tilde{L}_r(v,\theta,u)=L_r(v,\theta)$, $[v^0]L_r(v,\theta)=\delta_{r,0}$, and $[v^0u^1]\tilde{L}_r(v,\theta,u)=\delta_{r,0}\,\theta +\delta_{r,1}$. So we can represent the coefficients of $u^0$ and $u^1$ of~\eqref{eq:WXtoW} for $n=0$ as follows (for $(g,m)\neq(0,0)$): %, for $m+n\neq 0$,
	\begin{align} \label{eq:34}
		[u^0]\cW^{X,(g)}_{m+1,0}(u;X)&=\dfrac{d \log w}{d\log X}W^{(g)}_{m,1}+\mathrm{Rest_0}
		,\\ \label{eq:35}
		[u^1]\cW^{X,(g)}_{m+1,0}(u;X)&=\dfrac{d \log w}{d\log X}W^{(0)}_{1,0}W^{(g)}_{m,1}+\mathrm{Rest_1},
	\end{align}
%where $\mathrm{Rest_0}$ is the sum from the right hand side of~\eqref{eq:WXtoW} with $\tilde L_r$ replaced with $L_r$ and with the $j=0,\,r=0$ term omitted, and similarly for $\mathrm{Rest_1}$.
where
\begin{align} \label{eq:rest0}
	\mathrm{Rest_0}&=
	[\hbar^{2g}]\sum_{j\ge1} (X\partial_X)^j[v^j]\Bigg(\sum_{r\ge0} L_r(-v,W^{(0)}_{0,1}(w))[\tilde u^r]\frac{d\log w}{d\log X}e^{-\tilde u\,W^{(0)}_{0,1}(w)}\cW^{w}_{m,1}(\tilde u;w)
	\\ \notag &\phantom{=[\hbar^{2g}]\sum_{j\ge1} (X\partial_X)^j[v^j]\Bigg(}
	-\delta_{m,0}\frac{L_0(-v,W^{(0)}_{0,1}(w))}{v}\frac{dW^{(0)}_{0,1}(w)}{d\log X}\Bigg)\\  \notag 
	&\phantom{=}-\delta_{m,0}[\hbar^{2g}v^1]L_0(-v,W^{(0)}_{0,1}(w))\frac{dW^{(0)}_{0,1}(w)}{d\log X}%+\delta_{m,0} W^{(0)}_{0,1}(w)
	%\mathrm{Rest_0}
	,\\ \label{eq:rest1}
	\mathrm{Rest_1}&=[\hbar^{2g}]\sum_{j\ge1} (X\partial_X)^j[v^j]\Bigg(\sum_{r\ge0}[\tilde u^1]\tilde L_r(-v,W^{(0)}_{0,1}(w),u)[\tilde u^r]\frac{d\log w}{d\log X}e^{-\tilde u\,W^{(0)}_{0,1}(w)}\cW^{w}_{m,1}(\tilde u;w) 
	\\ \notag & \phantom{=[\hbar^{2g}]\sum_{j\ge1} (X\partial_X)^j[v^j]\Bigg(}
	-\delta_{m,0}\frac{[\tilde u^1]\tilde L_0(-v,W^{(0)}_{0,1}(w),u)}{v}\frac{dW^{(0)}_{0,1}(w)}{d\log X}\Bigg)\\ \notag
	&\phantom{=}+[\hbar^{2g}\tilde u^1]\frac{d\log w}{d\log X}e^{-\tilde u\,W^{(0)}_{0,1}(w)}\cW^{w}_{m,1}(\tilde u;w)\\ \notag
	&\phantom{=}-\delta_{m,0}[\hbar^{2g}v^1u^1]\tilde L_0(-v,W^{(0)}_{0,1}(w),u)\frac{dW^{(0)}_{0,1}(w)}{d\log X}.%+\delta_{m,0}\frac{\left(W^{(0)}_{0,1}(w)\right)^2}{2}.
\end{align}

	By the linear and quadratic loop equations~\eqref{eq:lloop}--\eqref{eq:qloop} for $W^{(g)}_{m,0}$ 
	the left hand sides of Equations~\eqref{eq:34}--\eqref{eq:35} belong to~$\Xi^X$, cf.~\eqref{eq:Xi}.
	Let us study~\eqref{eq:rest0}--\eqref{eq:rest1}. Note that the right hand sides of these formulas
	 %The second summands on the right hand sides 
	 involve only functions $W^{(g')}_{m',n'}$ with $2g'-2+m'+n'\leq2g-2+m$. Indeed, from~\eqref{eq:cwdef1}--\eqref{eq:cwdef2} note that $e^{-\tilde u\,W^{(0)}_{0,1}(w)}$ gets canceled with a factor coming from $e^{\cT^w_{0,1}(u;w)}$ in $\cW^{w}_{m,1}(\tilde u;w)$, and $\forall k\in\mathbb{Z}_{\geq 0}\;\; [\hbar^{k}]e^{-\tilde u\,W^{(0)}_{0,1}(w)}\cW^{w}_{m,1}(\tilde u;w)$ is a finite polynomial in $\tilde u$. The way $\hbar$ enters the expressions in~\eqref{eq:rest0}--\eqref{eq:rest1} guarantees that after taking $[\hbar^{2g}]$ we are left only with functions $W^{(g')}_{m',n'}$ with $2g'-2+m'+n'\leq2g-2+m$. In particular, note that  for $j\geq 1$ the expression $[v^j]\tilde L_0(-v,W^{(0)}_{0,1}(w),u)$ is proportional at least to $\hbar^1$, and thus we do not get any instances of e.g. $W^{(g)}_{m,1}$ in the first lines of~\eqref{eq:rest0} and~\eqref{eq:rest1}. We also do not get any instance of e.g. $W^{(g)}_{m,1}$ from the second-to-last line of~\eqref{eq:rest1} since we take $[\tilde u^1]$ there, and after the cancellation of $e^{-\tilde u\,W^{(0)}_{0,1}(w)}$ (as mentioned above) all of the remaining respective terms in $\cW^{w}_{m,1}(\tilde u;w)$ contain extra powers of $\hbar$ etc.
	 
	  So, arguing by induction, we may assume that the regularity of all functions $W^{(g')}_{m',n'}$ entering~\eqref{eq:rest0}--\eqref{eq:rest1} at zeros of~$dX$ with respect to the variables $z_j$, $j\in \set{m'+n'}\setminus\set{m'}$, is already proved. It follows that $\mathrm{Rest_0}$ and $\mathrm{Rest_1}$ also belong to $\Xi^X$, since $X\partial_X$ preserves $\Xi^X$ and all the other factors entering these expressions are regular at the  %points $p_k$. 
	zero points of $dX$.
	We conclude that $W^{(g)}_{m,1}$ regarded as a function of the last argument $z_{m+1}=z$ satisfies
	\begin{equation} \label{eq:LLE-QLE-holo}
		\frac{d \log w}{d\log X}W^{(g)}_{m,1}\in\Xi^X,\qquad
		W^{(0)}_{1,0}\frac{d\log w}{d\log X}W^{(g)}_{m,1}\in\Xi^X.
	\end{equation}
	Let us show that this implies that $W^{(g)}_{m,1}$ is holomorphic at a given %the corresponding 
	zero point~$p$ of $dX$. Indeed, the first condition implies that if $\frac{d\log w}{d\log X}W^{(g)}_{m,1}$ has a pole at~$p$ then this pole should be of odd order. On the other hand, multiplying it by $W^{(0)}_{1,0}-W^{(0)}_{1,0}(p)$ we again obtain an element of~$\Xi^X$, but the order of pole decreases by~$1$ changing the parity of its order (since $W^{(0)}_{1,0}$ is regular at $p$). This is only possible if $\frac{d \log w}{d\log X}W^{(g)}_{m,1}$ has at most simple pole, that is, $W^{(g)}_{m,1}$ is regular.
	
	Thus, we have established regularity of  $W^{(g)}_{m,1}$ in $z_{m+1}$ at any zero of~$dX$. Applying the recursion~\eqref{eq:mainrecchange} we obtain that $W^{(g)}_{m+1-n,n}$ is regular in $z_{m+1}$  at zeroes of $dX$ for any $n=1,2,\dots,m+1$. On the other hand, since $W^{(g)}_{m,n}$ is symmetric with respect to $z_{m+1},\dots,z_{m+n}$ (as can be seen by iteratively applying the recursive formula~\eqref{eq:mainrecchange} expressing everything in terms of $W^{g}_{m,0}$, obtaining a formula similar to~\eqref{eq:SymplDualExplicit}, which is symmetric), we established thereby the regularity of $W^{(g)}_{m,n}$ with respect to all last~$n$ arguments at zeroes of~$dX$ by induction in $2g-2+m+n$.
	
	Same as with the poles at the zeroes of $dw$ which we mentioned at the beginning of this proof, this implies the loop equations~\ref{eq:Xloop} %$[w^{r-1}]\cW^{x,(g)}_{m+1,n}(w;z)\in\Xi^x$ 
	for all $(g,m,n)$ %and all $r$
	by the very form of~\eqref{eq:WwtoW}.
\end{proof}
\subsection{Projection property}
Now we need to prove the remaining part of the projection property that can be formulated for the functions $W^{(g)}_{0,n}$:
\begin{proposition}\label{prop:projpr}
	The functions $W^{(g)}_{0,n}$ defined by~\eqref{eq:WsdZ} satisfy the projection property. That is, the respective differentials $\omega^{(g)}_{0,n+1}(z,z_{\llbracket n\rrbracket})$, for $2g-1+n>0$, have no poles in $z$ other than $p_1,\dots,p_N$.
\end{proposition}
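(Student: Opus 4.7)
My plan is to proceed by induction on $2g-1+n$, exploiting the recursive formula~\eqref{eq:mainrecchange} of Proposition~\ref{prop:Wmnrec} with $m=0$, which expresses $W^{(g)}_{0,n+1}$ as a differential-polynomial in $\cW^{X,(g')}_{1,n}$ and lower-order $W^{(h)}_{m',n'}$ with $2h-2+m'+n'<2g-1+n$, dressed by the factor $d\log X/d\log w$ and by operators built from $\psi$, $\hat\psi$ and their derivatives. Since $W^{(g)}_{0,n+1}$ is symmetric in its $w$-arguments (seen by iterated application of~\eqref{eq:mainrecchange}, which reproduces the symmetric expression~\eqref{eq:SymplDualExplicit}), it is enough to control the poles in the single variable $z=z_{n+1}$; the projection property for $\omega^{(g)}_{0,n+1}$ then amounts to showing that $W^{(g)}_{0,n+1}\,\tfrac{dw(z)}{w(z)}$ has no pole in $z$ outside the zeroes of $dw$.

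The candidate pole loci to rule out in $z$ are: zeroes of $dX$ that are not zeroes of $dw$; the distinguished point $O$ where $X(O)=0$; poles of $y$; and poles of $X$. The first family is immediate from Proposition~\ref{prop:looppoles}, which gives holomorphy of each $W^{(h)}_{0,n'+1}$ in its arguments at zeroes of $dX$, together with the observation that the factor $d\log X/d\log w$ appearing in~\eqref{eq:mainrecchange} vanishes simply at any zero of $dX$ not shared with $dw$, absorbing the at-most-simple poles of $\cW^{X,(g)}_{1,n}$ allowed by the linear and quadratic loop equations. At the point $O$, where $X$ is a local coordinate with $y(O)=0$ under our normalization and every $W^{(h)}_{m,0}$ is given by a power series in $X$, the operators $\cS(\hbar u X\partial_X)$ and the substitutions $\restr{\theta}{W^{(0)}_{1,0}(X)}$ that appear in~\eqref{eq:SymplDualExplicit} preserve that regularity, so no pole is created at $O$.

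The remaining case, poles of $y$ (which may coincide with poles of $X$), is the main obstacle: there $\psi(y)=P(y)+\log R(y)$ and the $\psi'(y)$ entering the operators $L_r$ from~\eqref{eq:Lr} diverge, while $w=X\,e^{-\psi(y)}$ may acquire zeros or poles of high order. I would handle this by invoking the \emph{enumerative-type spectral curve condition}, which stipulates that at each pole of $y$ either $X$ is finite and unramified or $d\log X$ has a simple pole. This is precisely the hypothesis under which the analogous pole analysis of~\cite{BDKS-toporec-KP} was carried out on the ordinary-maps side; concretely, I would use the exchange identity~\eqref{eq:WXtoW} of Proposition~\ref{prop:cWmnrec} to re-express the potentially singular contributions of $\cW^{X,(g)}_{1,n}$ in terms of $\cW^{w,(g)}_{0,n+1}$, whose $w$-variable divergences manifestly collapse onto zeroes of $dw$ only. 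The core of the argument therefore reduces, by the induction hypothesis and the identities of Proposition~\ref{prop:cWmnrec}, to a local pole-analysis step that is essentially identical to the one performed in~\cite{BDKS-toporec-KP}; the hypotheses of Theorem~\ref{thm:mainthm} were chosen so that this transposition is routine rather than requiring new ideas, and combining the resulting projection property with the loop equations already established in Proposition~\ref{prop:looppoles} completes the proof of Theorem~\ref{thm:mainthm}.
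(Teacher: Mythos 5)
Your overall strategy (induction plus the recursion~\eqref{eq:mainrecchange}, holomorphy at zeroes of $dX$ from Proposition~\ref{prop:looppoles}, and a transposition of the pole analysis of~\cite{BDKS-toporec-KP} at the poles of $y$ under the \emph{enumerative-type spectral curve condition}) matches the structure of the paper's argument. However, there is a genuine gap: your list of candidate pole loci (zeroes of $dX$, the point $O$, poles of $y$, poles of $X$) omits the points where $y(z)$ is \emph{finite} but hits a zero or a pole of the rational function $R(\theta)$. At such points $y$, $X$ and all the $W^{(h)}_{m,0}$ are regular and the enumerative-type condition says nothing, yet $\log R(y(z))$ and hence $\psi(y(z))$ blow up, $\psi'(y(z))$ entering $L_r$ in~\eqref{eq:Lr} has a pole, the combination $\tfrac{\cS(\hbar v\partial_\theta)}{\cS(\hbar\partial_\theta)}\hat\psi(\theta)-\psi(\theta)$ is singular at $\theta=y(z)$, and $w=Xe^{-\psi(y)}$ acquires a zero or pole, so $d\log X/d\log w$ and $dw/w$ degenerate. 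These are precisely the potential poles of $\omega^{(g)}_{0,n+1}$ that the paper singles out as the new feature relative to the $x$--$y$ swap setting, and their cancellation is not automatic: it is the analogue of \cite[Lemma 4.4]{BDKS-toporec-KP} and uses in an essential way the specific relation between $\hat\psi$ and $\psi$, namely $\hat\psi=\cS(\hbar\partial_\theta)P+\log R$ versus $\psi=P+\log R$ together with the simplicity of the zeroes and poles of $R$ from the generality condition. As written, your proof never rules out poles of $W^{(g)}_{0,n+1}\,\tfrac{dw}{w}$ on this locus, so the projection property is not established.

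Two smaller points. First, your argument at zeroes of $dX$ is both imprecise and unnecessary: the loop equations do not restrict $\cW^{X,(g)}_{1,n}$ to at-most-simple poles (they only force the principal part to be skew-symmetric under the deck involution), but you do not need this, since Proposition~\ref{prop:looppoles} already asserts holomorphy of $W^{(g)}_{0,n+1}$ in each variable at every zero of $dX$. Second, the claim that the $w$-variable singularities of $\cW^{w,(g)}_{0,n+1}$ ``manifestly collapse onto zeroes of $dw$'' is not manifest (it contains $W^{(0)}_{0,1}=y$ and the other functions whose pole structure is exactly what is being proved); this step should instead be phrased, as in the paper, as a local analysis at each pole of $y$ following \cite[Lemmas 4.3 and 4.6]{BDKS-toporec-KP}, with the pair $(w,X)$ here playing the role of $(X,z)$ there.
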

\begin{proof}
	The proof goes along the lines of \cite[Section 5.3]{ABDKS-XYSwap}. One difference is that there may be poles of $\omega^{(g)}_{0,n+1}$ in $z$ at the poles of $\psi(y(z))$ coming from the zeroes and poles of the rational function $R(\theta)$. But these poles are canceled by the arguments completely analogous to the ones of~\cite[Lemma 4.4]{BDKS-toporec-KP}.
	
	The possible poles of $\omega^{(g)}_{0,n+1}$ in $z$ at the poles of $y(z)$ get canceled analogously to~\cite[Lemma 4.3]{BDKS-toporec-KP} in the case if $X$ does not a have a pole at this given pole of $y$, and analogously to~\cite[Lemma 4.6]{BDKS-toporec-KP} if $dX/X$ has a simple pole there.
	%The second (and final) difference is that t
	%The cancellation of the pole at infinity %looks a bit more complicated due to the appearance of $\hat\psi$ and $\psi$. This cancellation 
	%happens analogously to \cite[Lemma 4.6]{BDKS-toporec-KP}.  

	Note that 
	\begin{itemize}
		\item in these analogies the pair $(w,X)$ from the present paper works as the pair $(X,z)$ from~\cite{BDKS-toporec-KP};

		\item a part of the \emph{enumerative-type spectral curve condition} that concerns poles of $y$ was specifically added to the statement to make the references to~\cite[Lemmas 4.3 and 4.6]{BDKS-toporec-KP} possible and to avoid further analysis at this point;

		\item for these cancellations to happen we need exactly the relation between $\hat\psi$ and $\psi$ that we have (Equations~\eqref{eq:hatpsi} and~\eqref{eq:psi}).
	\end{itemize}
	\end{proof}

\subsection{Proof of Theorem~\ref{thm:mainthm}}
	\begin{proof}[Proof of Theorem~\ref{thm:mainthm}]
	Consider the $n$-point functions $W^{(g)}_{0,n}$ defined by their expansions~\eqref{eq:WsdZ}.
	
	Proposition~\ref{prop:looppoles} implies that the $n$-point functions $W^{(g)}_{0,n}$ satisfy the linear and quadratic loop equations of Definition~\ref{def:trloopeqdef} for the spectral curve $(\Sigma,w,y,B)$, where $w(z)$ is given by~\eqref{eq:wofz}. Proposition~\ref{prop:projpr} states that they satisfy the projection property. This means that they are precisely the $n$-point functions produced by the topological recursion on this spectral curve. 
	
	 As mentioned in Section~\ref{sec:expansion}, \cite[Theorem~4.14]{BDKS-FullySimple} implies that $W^{(g)}_{0,n}$ of~\eqref{eq:WsdZ} can be expressed in terms of the functions $W^{(g)}_{m,0}$ of~\eqref{eq:WZ} exactly via the formula~\eqref{eq:SymplDualExplicit}. This completes the proof.
 \end{proof}

%We will go over all possible poles.
%\begin{itemize}
%	\item The poles at the zeroes of $dX$. There are no such poles according to Proposition~\ref{prop:looppoles}.
%	\item The poles at the zeroes of the denominator of $R$.
%	\item
%\end{itemize}

\printbibliography
\end{document}